\definecolor{cerulean}{rgb}{0.10, 0.58, 0.75}
\definecolor{darkred}{rgb}{0.65,0,0}
\newcommand{\algind}{\hspace{\algorithmicindent}}
\newcommand{\wo}{\backslash}
\newcommand{\inprod}{\cdot}
\newtheorem{theorem}{Theorem}
\newtheorem{lemma}{Lemma}
\newtheorem{example}{Example}
\newtheorem{definition}{Definition}
\newcommand{\Fig}[1]{Fig.~\ref{fig:#1}}
\newcommand{\Sec}[1]{Section~\ref{sec:#1}}
\newcommand{\Eq}[1]{Eq.~\eqref{eq:#1}}
\newcommand{\eq}[1]{Eq.~\eqref{eq:#1}}
\newcommand{\eqp}[1]{(Eq.~\ref{eq:#1})}
\newcommand{\eqs}[2]{Eqs.~\eqref{eq:#1} and~\eqref{eq:#2}}
\newcommand{\Algo}[1]{Algorithm~\ref{algo:#1}}
\newcommand{\Ex}[1]{Example~\ref{ex:#1}}
\newcommand{\Thm}[1]{Theorem~\ref{thm:#1}}
\newcommand{\Lem}[1]{Lemma~\ref{lem:#1}}
\newcommand{\I}{\mathcal{I}}
\newcommand{\M}{\mathcal{M}}
\newcommand{\Y}{\mathcal{Y}}
\newcommand{\Cx}{\tilde{C}}
\newcommand{\vpx}{\boldsymbol{\tilde{p}}}
\newcommand{\px}{\tilde{p}}
\newcommand{\tvtheta}{\boldsymbol{\tilde{\theta}}}
\newcommand{\tvthetaz}{\smash{\tvtheta}^{0}}
\newcommand{\pricevar}{\textit{price}}
\newcommand{\costvar}{\textit{cost}}
\newcommand{\muother}{\mu_\textit{other}}
\newcommand{\Sother}{S_\textit{other}}
\newcommand{\A}{\mathbf{A}}
\newcommand{\vb}{\mathbf{b}}
\newcommand{\va}{\mathbf{a}}
\newcommand{\one}{1}
\newcommand{\vone}{\mathbf{1}}
\newcommand{\vdelta}{\boldsymbol{\delta}}
\newcommand{\vphi}{\boldsymbol{\phi}}
\newcommand{\vtheta}{\boldsymbol{\theta}}
\newcommand{\veta}{\boldsymbol{\eta}}
\newcommand{\R}{\mathbb{R}}
\newcommand{\E}{\mathbb{E}}
\renewcommand{\Pr}{\mathbb{P}}
\newcommand{\vzero}{\boldsymbol{0}}
\newcommand{\vmu}{{\boldsymbol{\mu}}}
\newcommand{\sveta}{\boldsymbol{\eta^\star}}
\newcommand{\vp}{\boldsymbol{p}}
\newcommand{\Z}{\mathcal{Z}}
\newcommand{\lt}[1]{\textit{left}(#1)}
\newcommand{\rt}[1]{\textit{right}(#1)}
\newcommand{\pt}[1]{\textit{par}(#1)}
\newcommand{\sib}[1]{\textit{sib}(#1)}
\newcommand{\rootnode}{\textit{root}}
\newcommand{\tomega}{\nu}
\newcommand{\tree}{T}
\newcommand{\order}{\mathcal{O}}
\newcommand{\price}{\textbf{\upshape price}}
\newcommand{\buy}{\textbf{\upshape buy}}
\newcommand{\cost}{\textbf{\upshape cost}}
\newcommand{\vals}{\textit{vals}}
\newcommand{\nbuy}{n_{\textit{buy}}}
\newcommand{\nvals}{n_{\textit{vals}}}
\newcommand{\level}{\textit{level}}
\newcommand{\ttheta}{\tilde{\theta}}
\newcommand{\ly}{y_\textit{l}}
\newcommand{\ry}{y_\textit{r}}
\newcommand{\abs}[1]{\lvert#1\rvert}
\newcommand{\card}[1]{\lvert#1\rvert}
\newcommand{\Bracks}[1]{\left[#1\right]}
\newcommand{\bigBracks}[1]{\bigl[#1\bigr]}
\newcommand{\parens}[1]{(#1)}
\newcommand{\Parens}[1]{\left(#1\right)}
\newcommand{\bigParens}[1]{\bigl(#1\bigr)}
\newcommand{\BigParens}[1]{\Bigl(#1\Bigr)}
\newcommand{\biggParens}[1]{\biggl(#1\biggr)}
\newcommand{\set}[1]{\{#1\}}
\newcommand{\braces}[1]{\{#1\}}
\renewcommand{\prec}{\textit{prec}}
\newcommand{\snp}{S\&P\,500\xspace}
\title{Log-time Prediction Markets for Interval Securities}
\author{Miroslav Dud\'ik$^*$}
\affiliation{
  \institution{Microsoft Research, New York, NY}}
\email{mdudik@microsoft.com}
\thanks{$^*$Authors contribute equally.}
\author{Xintong Wang$^{*}$}
\affiliation{
  \institution{University of Michigan, Ann Arbor, MI}}
\email{xintongw@umich.edu}
\author{David M. Pennock}
\affiliation{
  \institution{Rutgers University, New Brunswick, NJ}}
\email{dpennock@dimacs.rutgers.edu }
\author{David M. Rothschild}
\affiliation{
	\institution{Microsoft Research, New York, NY}}
\email{davidmr@microsoft.com}
\begin{abstract}
We design a prediction market to recover a complete and fully general probability distribution over a random variable.
Traders buy and sell \textit{interval securities} that pay \$1 if the outcome falls into an interval and \$0 otherwise.
Our market takes the form of a central \textit{automated market maker} and allows traders to express interval endpoints of arbitrary precision.
We present two designs in both of which market operations take time logarithmic in the number of intervals (that traders distinguish), providing the first computationally efficient market for a continuous variable.
Our first design replicates the popular \textit{logarithmic market scoring rule} (LMSR), but operates exponentially faster than a standard LMSR by exploiting its modularity properties to construct a balanced binary tree and decompose computations along the tree nodes.
The second design consists of two or more parallel LMSR market makers that mediate submarkets of increasingly fine-grained outcome partitions.
This design remains computationally efficient for all operations, including arbitrage removal across submarkets.
It adds two additional benefits for the market designer:
(1)~the ability to express utility for information at various resolutions by assigning different liquidity values, and
(2)~the ability to guarantee a true constant bounded loss by appropriately decreasing the liquidity in each submarket.
\end{abstract}
\keywords{prediction market; automated market maker; expressive betting}
\newcommand{\BibTeX}{\rm B\kern-.05em{\sc i\kern-.025em b}\kern-.08em\TeX}
\begin{document}


\pagestyle{fancy}
\fancyhead{}

\maketitle

%
\section{Introduction}
\label{sec:intro}
Consider a one-dimensional random variable, such as the opening value of the \snp index on December 17, 2021.
We design a market for trading \emph{interval securities} corresponding to predictions that the outcome will fall into some specified interval, say between 2957.60 and 3804.59, implemented as binary contracts that pay out \$1 if the outcome falls in the interval and \$0 otherwise.
We are interested in designing \emph{automated market makers} to facilitate a fully \emph{expressive} market computationally efficiently.
Traders can select custom interval endpoints of arbitrary precision corresponding to a continuous outcome space, whereas the market maker will always offer to buy or sell any interval security at some price.

A form of interval security called the \textit{condor spread} is common in \textit{financial options} markets, with significant volume of trade.
Each condor spread involves trading four different options,%
\footnote{A \textit{call option} written on an underlying stock with \textit{strike price} $K$ and expiration date $T$ pays $\max\{S-K,0\}$, where $S$ is the opening price of the stock on date $T$.
	For example, 25 shares of
	``\$1 iff [2650,2775]''
	${}\approx{}$
	$\max\{S-2650,0\}-\max\{S-2675,0\}-\max\{S-2750,0\}+\max\{S-2775,0\}$.\looseness=-1}
and financial options offered by the market may only support a limited subset of approximate intervals.
As of this writing, \snp options expiring on December 17, 2021, distinguish 56 strike prices, allowing the purchase of around 1500 distinct intervals of minimum width 25.
Moreover, as each strike price trades independently despite the logical constraints on their relative values, it will require time linear in the number of offered strike prices to remove arbitrage.

Outside traditional financial markets, the \emph{logarithmic market scoring rule} (LMSR) market maker~\cite{Hanson03,Hanson07} has been used to elicit information through the trade of interval securities. 
The Gates Hillman Prediction Market at Carnegie Mellon University operated LMSR on 365 outcomes, representing 365 days of one year, to forecast the opening time of the new computer science building~\cite{Othman10}.
Traders could bet on different intervals by choosing a start and an end date. 
A similar market\footnote{\url{www.cs.utexas.edu/news/2012/research-corner-gates-building-prediction-market}}
was later launched at the University of Texas at Austin, using a liquidity-sensitive variation of LMSR~\cite{Othman13}.
Moreover, LMSR has been deployed to predict product-sales levels~\cite{Chen2002}, instructor ratings~\cite{Chakraborty13}, and political events~\cite{hanson1999}.

LMSR has two limitations that prevent its scaling to markets with a continuous outcome space.
First, LMSR's worst-case loss can grow unbounded if traders select intervals with prior probability approaching zero \cite{GaoChenPennock09}.
Second, standard implementations of LMSR operations run in time linear in the number of outcomes or distinct future values traders define---in our case, arbitrarily many.
The constant-log-utility and other barrier-function-based market makers~\cite{ChenPe07,Othman12} achieve constant bounded loss, but still suffer the second limitation regarding computational intractability.
Thus, previous markets allow only a relatively small set of predetermined intervals and run in time linear in the number of supported outcomes, limiting the ability to aggregate high-precision trades and elicit the full distribution of a continuous random variable.

In this paper, we propose two automated market makers that perform exponentially faster than the standard LMSR and previous designs.
Market operations (i.e., $\price$, $\cost$, and $\buy$) can be executed in time \emph{logarithmic} in the number of distinct intervals traded, or linear in the number of bits describing the outcome space. 
Our first market maker calculates LMSR exactly, but employs a balanced binary tree to implement interval queries and trades. 
We show that the normalization constant of LMSR---a key quantity in its price and cost function---can be calculated recursively via local computations on the balanced tree.
Our work here contributes to the rich literature that aims to overcome the worst-case \#P-hardness of LMSR pricing~\cite{ChenEtAl08} by exploiting the outcome space structure and limiting  expressivity~\cite{Chen:07,Guo:09,Chen:08b,XiaPe11,LaskeyEtAl18}.


Our second market maker works by maintaining parallel LMSR submarkets that adopt different liquidity parameters and offer interval securities at various resolutions. 
We show that liquidity parameters can be chosen to guarantee a \emph{constant} bounded loss independent of market precision and prices can be kept coherent efficiently by removing arbitrages across submarkets.
We demonstrate through agent-based simulation that our second design enjoys more flexible liquidity choices to facilitate the information-gathering objective: it can get close to the ``best of both worlds'' displayed by coarse and fine LMSR markets, with prices converging fast at both resolutions regardless of the traders' information structure.

The two proposed designs, to our knowledge, are the first to simultaneously achieve expressiveness and computational efficiency.
As both market makers facilitate trading intervals at arbitrary precision, they can elicit any probability distribution over a continuous random variable that can be practically encoded by a machine.
We use the \snp index value as a running example,
but our framework is generic and can handle any one-dimensional continuous variable, for example,
the landfall point of a hurricane along a coastline
or the number of tickets sold in the first week of a movie release.\looseness=-1

\section{Formal Setting}
\label{sec:setting}
We first review cost-function-based market making~\cite{AbernethyChVa11,ChenPe07}, and then introduce interval markets.
\subsection{Cost-Function-Based Market Making}
\label{sec:cost}
Let $\Omega$ denote a finite set of \emph{outcomes}, corresponding to mutually exclusive and exhaustive states of the world.
We are interested in eliciting expectations of binary random variables $\phi_i \colon \Omega\to\{0,1\}$, indexed by $i\in\I$, which model the occurrence of various events, such as ``\emph{\snp will open between 2957.60 and 3804.59 on December 17, 2021}''.
Each variable $\phi_i$ is associated with a \emph{security} that pays out $\phi_i(\omega)$ when the outcome $\omega\in\Omega$ occurs, and thus $\phi_i$ is also called the \emph{payoff function}.
Binary securities pay out \$1 if the specified event occurs and \$0 otherwise.
The vector $(\phi_i)_{i\in\I}$ is denoted~$\vphi$.
Traders trade \emph{bundles} $\vdelta \in \R^{\card{\I}}$ of security with a central market maker, where positive entries in $\vdelta$ correspond to purchases and negative entries to short sales.
A trader holding a bundle $\vdelta$ receives a payoff of $\vdelta\inprod\vphi(\omega)$, when $\omega$ occurs.

Following~\cite{AbernethyChVa11} and~\cite{ChenPe07}, we assume that the market maker determines security prices using a convex and differentiable potential function $C \colon \R^{\card{\I}}\to\R$, called a \emph{cost function}.
The state of the market is specified by a vector $\vtheta\in\R^{\card{\I}}$, listing the number of shares of each security \emph{sold} by the market maker so far.
A trader who wants to buy a bundle $\vdelta$ in the market state $\vtheta$ must pay $C(\vtheta+\vdelta)-C(\vtheta)$ to the market maker, after which the new state becomes $\vtheta+\vdelta$.

The vector of instantaneous prices in the corresponding state $\vtheta$ is $\vp(\vtheta)\coloneqq\nabla C(\vtheta)$.
Its entries can be interpreted as the market's collective estimates of $\E[\phi_i]$: a trader can make an expected profit by buying (at least a small amount of) the security $i$ if she believes that $\E[\phi_i]$ is larger than the instantaneous price $p_i(\vtheta)=\partial C(\vtheta)/\partial\theta_i$, and by selling
if she believes the opposite.
Therefore, risk neutral traders with sufficient budgets maximize their expected profits by moving the price vector to match their expectation of~$\vphi$.
Any expected payoff must lie in the convex hull of the set $\{\vphi(\omega)\}_{\omega \in \Omega}$, which we denote $\M$ and call a \emph{coherent price space} with its elements referred to as \emph{coherent price vectors}.

We assume that the cost function satisfies two standard properties: \emph{no arbitrage} and \emph{bounded loss}.
The \emph{no-arbitrage} property requires that as long as all outcomes $\omega$ are possible, there be no market transaction with a guaranteed profit for a trader.
In this paper, we use the fact that $C$ is arbitrage-free if and only if it yields price vectors $\vp(\vtheta)$ that are always coherent \cite{AbernethyChVa11}.
The \emph{bounded-loss} property is defined in terms of the worst-case loss of a market maker,
$\sup_{\vtheta\in\R^{\card{\I}}}\sup_{\omega\in\Omega}\bigBracks{\vtheta\inprod\vphi(\omega)-C(\vtheta)+C(\vzero)}$,
meaning the largest difference, across all possible trading sequences and outcomes, between the amount that the market maker has to pay the traders (once the outcome is realized) and the amount that the market maker has collected (when securities were traded).
The property requires that this worst-case loss be a priori bounded by a constant.


\subsection{Complete Markets and LMSR}
\label{sec:lmsr}
In a complete market, we have $\I=\Omega$.
Securities are indicators of individual outcomes, $\phi_i(\omega)=\one\set{\omega=i}$, where $\one\set{\cdot}$ denotes the binary indicator. 
We denote each market security as $\phi_\omega$.
A risk-neutral trader is incentivized to move the price of each security $\phi_\omega$ to her estimate of $\E[\phi_\omega]=\Pr[\omega]$, which is her subjective probability of $\omega$ occurring.
Thus, traders can express arbitrary probability distributions over $\Omega$.
We consider variants of LMSR market maker~\cite{Hanson03} for a complete market,
described by cost function and prices
\begin{equation}
\label{eq:LMSR}
\!C(\vtheta)=b\log\Parens{\sum_{\omega\in\Omega} e^{\theta_\omega / b}}, \quad
p_\omega(\vtheta)=\frac{\partial C(\vtheta)}{\partial\theta_\omega} = \frac{e^{\theta_\omega/b}}{\sum_{\tomega\in\Omega} e^{\theta_{\tomega}/b}},\!
\end{equation}
where $b$ is the liquidity parameter, controlling how fast the price moves in response to trading and limiting the worst-case loss of the market maker to $b\log\,\card{\Omega}$~\cite{Hanson03}.


The securities in a complete market can be used to express bets on any event $E$.
%
Specifically, one share of a security for the event $E$ can be represented by the indicator bundle $\vone_E\in\R^\Omega$ with entries $1_{E,\omega}=\one\set{\omega\in E}$.
We refer to this bundle as the \emph{bundle security for event $E$}.
%
The immediate price of the bundle $\vone_E$ in the state $\vtheta$ is
\begin{equation}
\label{eq:pE}
p_E(\vtheta)\coloneqq
\vone_E\inprod\vp(\vtheta)
=\sum_{\omega\in E} p_\omega(\vtheta)
=\frac{
	\sum_{\omega\in E} e^{\theta_\omega/b}
}{
	\sum_{\tomega\in\Omega} e^{\theta_{\tomega}/b}
}
.
\end{equation}
The cost of buying the bundle $s\vone_E$, or sometimes referred to as ``the cost of $s$ shares of $\vone_E$'', can be written
as a function of $p_E(\vtheta)$ and $s$:
\begin{align}
&C(\vtheta+s\vone_E)-C(\vtheta) \label{eq:costE}\\[1em]
&\;\;{}=
\smash[t]{
	b\log\Parens{\sum_{\omega\not\in E}e^{\theta_\omega/b} + \sum_{\omega\in E}e^{(\theta_\omega+s)/b}}
	-b\log\Parens{\sum_{\omega\in\Omega} e^{\theta_\omega/b}}
} \notag\\
&\;\;{}=
b\log\Parens{p_{E^c}(\vtheta) + e^{s/b}p_E(\vtheta)}
=
b\log\Parens{1-p_{E}(\vtheta) + e^{s/b}p_E(\vtheta)}. \notag
\end{align}
Above, we write $E^c$ for the complementary event $E^c=\Omega\wo E$, and use the fact ${p_{E}(\vtheta)+p_{E^c}(\vtheta)=1}$, which follows from \eq{pE}.

\subsection{Interval Securities over $[0,1)$}
We consider betting on outcomes within an interval $[0,1)$.
Our approach generalizes to outcomes that are in any $[\alpha,\beta)\subseteq[-\infty,\infty)$ by applying any increasing transformation $F:[\alpha,\beta)\to[0,1)$. 
We assume that the outcome $\omega$ is specified with $K$ bits, meaning that there are $N=2^K$ outcomes with $\Omega=\set{j/N:\:j\in\set{0,1,\dotsc,N-1}}$.
At the end of Sections~\ref{sec:log-time mm} and~\ref{sec:const-loss}, we discuss how the assumption of pre-specified bit precision can be removed.
\begin{example}[Complete market for \snp]
	We construct a complete market for the \snp opening price on December 17, 2021, by setting $N=2^{19}=\text{524,288}$.
	The resulting complete market is $\I=\set{0,\,0.01,\,\dotsc,\,5242.86,\,5242.87}$, where we cap prices at \$5242.87 (i.e., larger prices are treated as \$5242.87).
	The transformed outcome is then $\omega=\omega'/N$, where $\omega'$ is the $\snp$ price in cents.
\end{example}

In the outcome space $\Omega$, we would like to enable price and cost queries as well as buying and selling of \emph{bundle securities} for the interval events $I=[\alpha,\beta)$ for any $\alpha,\beta\in\Omega\cup\set{1}$.
For cost-based markets, sell transactions are equivalent to buying a negative amount of shares, so we design algorithms for three operations: $\price(I)$, $\cost(I,s)$, and $\buy(I,s)$, where $I$ is the interval event and $s$ the number of shares.
A naive implementation of $\price$ and $\cost$ following \eqs{pE}{costE} would be \emph{linear} in $N$.
In this paper, we propose to implement these operations in time that is \emph{logarithmic} in $N$. 



\section{A Log-time LMSR Market Maker} 
\label{sec:log-time mm}
We design a data structure, referred to as an \emph{LMSR tree}, which resembles an \emph{interval tree}~\cite[Section 15.3]{CLR99}, but includes additional annotations to support LMSR calculations.
We first define the LMSR tree, and show that it can facilitate market operations in time logarithmic in the number of distinct intervals that traders define. 

\subsection{An LMSR Tree for $[0, 1)$}
We represent an LMSR tree $T$ with a \emph{full binary tree}, where each node $z$ has either no children (when $z$ is a leaf) or exactly two children, denoted $\lt{z}$ and $\rt{z}$ (when $z$ is an inner node).
The root is denoted $\rootnode$ and the parent of any non-root node $\pt{z}$.
\begin{definition}[LMSR Tree]
	\label{def:lmsr_tree}
	An \emph{LMSR tree} is a full binary tree, where each node $z$ is annotated with an interval $I_z=[\alpha_z,\beta_z)$ with $\alpha_z,\beta_z\in\Omega\cup\set{1}$, a height $h_z \ge 0$, a quantity $s_z \in\R$ that records the number of sold bundle securities associated with $I_z$, and a partial normalization constant $S_z \ge 0$ (defined below in Eq.~\ref{eq:Sz}).

An LMSR tree is required to satisfy:
	\begin{itemize}[leftmargin=*]
		\item \emph{Binary-search property}: $I_{\rootnode}=[0,1)$, and for inner node $z$,
		\[\alpha_z=\alpha_{\lt{z}}\;<\;\beta_{\lt{z}}=\alpha_{\rt{z}}\;<\;\beta_{\rt{z}}=\beta_z.\]
		\item \emph{Height balance}: $h_z=0$ for leaves, and for inner node $z$,
		\[h_z=1+\max\set{h_{\lt{z}},h_{\rt{z}}}, \quad \abs{h_{\lt{z}}-h_{\rt{z}}}\le 1.\]
		\item \emph{Partial-normalization correctness}:
		$S_z= e^{s_z/b}\cdot (\beta_z-\alpha_z)$ for leaves, and for inner node $z$,
		\[S_z=e^{s_z/b}\cdot \Parens{S_{\lt{z}}+S_{\rt{z}}}.\]
	\end{itemize}
\end{definition}
%
The \emph{binary-search property} helps to find the unique leaf that contains any $\omega\in\Omega$ by descending from $\rootnode$ and choosing left or right in each node based on whether $\omega<\beta_{\lt{z}}$ or $\omega\ge\beta_{\lt{z}}$.
The \emph{height-balance property} ensures that the path length from
root to any leaf is at most $\order(\log n)$, where $n$ is the number of leaves of the tree~\cite{Knuth}.
We adopt an \emph{AVL tree}~\cite{AVL62} at the basis of our LMSR tree, but other balanced binary-search trees (e.g., red-black trees or splay trees) could also be used.

To facilitate LMSR computations, we maintain a scalar quantity $s_z\in\R$ for each node $z$, which records the number of \emph{bundle securities} associated with $I_z$ sold by the market maker.
Therefore, the market state and its components for each individual outcome $\omega$ represented by the LMSR tree $\tree$ are%
\footnote{We write $\omega\in z$ to mean $\omega\in I_z$ and $z'\subseteq z$ to mean $I_{z'}\subseteq I_{z}$. Thus, $z'\subseteq z$ means that $z'$ is a descendant of $z$ in $T$, and $z'\subset z$ means that $z'$ is a strict descendant of $z$.}
\begin{equation}
\label{eq:thetaT}
\vtheta(\tree)=\smash[t]{\sum_{z\in\tree} s_z\vone_{I_z}};
\quad \theta_{\omega}(T)=\sum_{z\in T} s_z 1_{I_z,\omega}
=\sum_{z\ni\omega} s_z.
\end{equation}
%
%
The normalization constant in the LMSR price \eqp{pE} is then
\begin{equation}
\label{eq:Sroot}
\sum_{\omega\in\Omega} e^{\theta_\omega/b}
=\sum_{\omega\in\Omega} e^{\sum_{z\ni\omega} s_z/b}
=\sum_{\omega\in\Omega} \prod_{z\ni\omega} e^{s_z/b}.
\end{equation}
%
We decompose the computation of the above normalization constant along the nodes of an LMSR tree, by defining a \emph{partial normalization constant} $S_z$ in each node:%
%
\begin{equation}
\label{eq:Sz}
S_z\coloneqq
\frac1N
\adjustlimits\sum_{\omega\in z}\prod_{\;\;z':\:z\supseteq z'\ni\omega\;\;}
\!\!e^{s_{z'}/b}.
\end{equation}
Thus, we have $\sum_{\omega\in\Omega} e^{\theta_\omega/b}=NS_\rootnode$ and obtain the following recursive relationship, which we refer to as \emph{partial-normalization correctness} and is at the core of implementing $\price$ and $\buy$:\looseness=-1
\begin{equation}
\label{eq:Sz:rec}
S_z
=
\begin{cases}
e^{s_z/b}\cdot (\beta_z-\alpha_z)
&\text{if $z$ is a leaf,}
\\
e^{s_z/b}\cdot \Parens{S_{\lt{z}}+S_{\rt{z}}}
&\text{otherwise.}
\end{cases}
\end{equation}

Based on the LMSR tree construction, we implement the following operations for any interval $I=[\alpha,\beta)$:
\begin{itemize}[leftmargin=*]
	\item $\price(I,\tree)$ returns the price of bundle security for $I$;
	\item $\cost(I,s,\tree)$ returns the cost of $s$ shares of bundle security for $I$;\looseness=-1
	\item $\buy(I,s,\tree)$ updates $\tree$ to reflect the purchase of $s$ shares of bundle security for $I$.
\end{itemize}
For $\cost$, it suffices to implement $\price$ and use \eq{costE}.
Since the price of $[\alpha,\beta)$ satisfies $p_{[\alpha,\beta)}(\vtheta)=p_{[\alpha,1)}(\vtheta)-p_{[\beta,1)}(\vtheta)$, it
suffices to implement $\price$ for intervals of the form $[\alpha,1)$.
Similarly, buying $s$ shares of $[\alpha,\beta)$ is equivalent to first buying $s$ shares of $[\alpha,1)$ and then buying $(-s)$ shares of $[\beta,1)$, as the market ends up in the same state $\vtheta+s\vone_{[\alpha,\beta)}$.
We implement $\price$ and $\buy$ for one-sided intervals $I=[\alpha,1)$, and the remaining operations will follow.

\subsection{Price Queries}
We consider price queries for $I=[\alpha,1)$.
Let $\vals(T)=\set{\alpha_z:\:z\in T}$ denote the set of distinct left endpoints in the tree nodes.
We start by assuming that $\alpha\in\vals(T)$, and later relax this assumption.
We proceed to calculate $p_I(\vtheta)$ in two steps.
\emph{First}, we construct a set of nodes $\Z$ whose associated intervals $I_z$ are disjoint and cover $I$.
To achieve this, we conduct a binary search for $\alpha$, putting in $\Z$ all of the right children of the visited nodes that have $\alpha_z>\alpha$, as well as the final node with $\alpha_z=\alpha$.
Thanks to the height balance, the cardinality of $\Z$ is $\order(\log n)$, where $n$ is the
number of leaves of $T$.
The resulting set $\Z$ satisfies $p_I(\vtheta)=\sum_{z\in\Z} p_{I_z}(\vtheta)$.

\emph{Second}, we determine $p_{I_z}(\vtheta)$ for each node $z\in\Z$.
Starting from the LMSR price in \eq{pE}, we take advantage of the defined partial normalization constants $S_z$ to calculate $p_{I_z}(\vtheta)$:
\begin{align}
\label{eq:pIz:expand}
p_{I_z}(\vtheta)
&=\smash[t]{\frac{1}{NS_{\rootnode}}\sum_{\omega\in z} e^{\theta_\omega/b}}
=\frac{1}{S_{\rootnode}}\cdot\frac1N
\adjustlimits\sum_{\omega\in z}\prod_{z'\ni\omega}e^{s_{z'}/b}\\
\label{eq:pIz:decompose}
&=\frac{1}{S_{\rootnode}}\cdot\frac1N\sum_{\omega\in z}\Bracks{
	\Parens{\prod_{z':\:z\supseteq z'\ni\omega}
		e^{s_{z'}/b}}
	\Parens{\prod_{z'\supset z} e^{s_{z'}/b}}}\\
\label{eq:pIz}
&=\frac{S_z}{S_{\rootnode}}
\,\,\underbrace{\!\!\Parens{\prod_{z'\supset z} e^{s_{z'}/b}}\!\!}_{P_z}\,\,.
\end{align}
In \eq{pIz:expand}, we use that $NS_\rootnode=\sum_{\omega\in\Omega} e^{\theta_\omega/b}$
and then expand $\theta_\omega$ using \eq{thetaT}.
In \eq{pIz:decompose}, we use the fact that any node $z'$ with a non-empty intersection with $z$ (i.e., $I_z\cap I_{z'}\ne\emptyset$) must be either a descendant or an ancestor of $z$ as a direct consequence of the binary-search property.
The product $P_z$ in \Eq{pIz} iterates over $z'$ on the path from root to $z$, and thus can be calculated along the binary-search path.\looseness=-1
%

We now handle the case when $\alpha\not\in\vals(T)$.
After the leaf $z$ on the search path is reached, we have $\alpha_z<\alpha<\beta_z$.
Instead of expanding the tree, we conceptually create two children of $z$: $z'$ and $z''$ with $I_{z'}=[\alpha_z,\alpha)$ and $I_{z''}=[\alpha,\beta_z)$, and add $z''$ in $\Z$.
Since $\theta_{\omega}$ is constant across $\omega\in I_z$, we obtain $p_{I_{z''}}(\vtheta)=\frac{\beta_z-\alpha}{\beta_z-\alpha_z}\cdot p_{I_z}(\vtheta)$ by \eq{pE}.\looseness=-1

Summarizing the foregoing procedures yields Algorithm~\ref{algo:price}, which simultaneously constructs the set $\Z$ and calculates the prices $p_{I_z}(\vtheta)$.
Since it suffices to go down a single path and only perform constant-time computation in each node, the resulting algorithm runs in time $\order(\log \nvals)$, where $\nvals$ denotes the number of distinct values appeared as endpoints of intervals in all the executed transactions.
We defer complete proofs from this paper to the appendix, which is available in the full version of this paper on arXiv. 
\begin{theorem}
	\label{thm:lmsr_price}
	Algorithm~\ref{algo:price} 
    implements $\price$ in time $\order(\log \nvals)$.
\end{theorem}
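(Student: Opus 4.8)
The plan is to establish correctness and the $\order(\log\nvals)$ time bound for $\price$ separately. Since the price of a two-sided interval satisfies $p_{[\alpha,\beta)}(\vtheta)=p_{[\alpha,1)}(\vtheta)-p_{[\beta,1)}(\vtheta)$, it suffices to treat one-sided queries $I=[\alpha,1)$, each handled by a single root-to-leaf descent. For such an $I$ the correctness claim splits into two parts: (i) the node set $\Z$ assembled during the descent has pairwise disjoint intervals whose union is exactly $I$, so that $\vone_I=\sum_{z\in\Z}\vone_{I_z}$ and hence $p_I(\vtheta)=\vone_I\inprod\vp(\vtheta)=\sum_{z\in\Z}p_{I_z}(\vtheta)$; and (ii) the value the algorithm accumulates for each $z\in\Z$ equals $p_{I_z}(\vtheta)$.

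For part (i) I would run an induction along the search path with the invariant: when the descent reaches node $z$, one has $\alpha_z\le\alpha<\beta_z$, and the intervals currently in $\Z$ are pairwise disjoint with $\bigcup_{z'\in\Z}I_{z'}=[\beta_z,1)$. The base case $z=\rootnode$, $\Z=\emptyset$ holds because $[\beta_\rootnode,1)=[1,1)=\emptyset$. For the inductive step, the binary-search property gives $\alpha_{\lt z}=\alpha_z$, $\beta_{\lt z}=\alpha_{\rt z}$, and $\beta_{\rt z}=\beta_z$: a left turn happens exactly when $\alpha<\beta_{\lt z}$, in which case $\rt z$ with $I_{\rt z}=[\beta_{\lt z},\beta_z)$ is appended, so the union becomes $[\beta_{\lt z},\beta_z)\cup[\beta_z,1)=[\beta_{\lt z},1)$ and $\alpha_{\lt z}=\alpha_z\le\alpha<\beta_{\lt z}$; a right turn happens when $\alpha\ge\beta_{\lt z}=\alpha_{\rt z}$, nothing is appended, and $\alpha_{\rt z}\le\alpha<\beta_{\rt z}=\beta_z$; in both cases the invariant is preserved, and disjointness is automatic since the binary-search property forces any two tree intervals to be nested or disjoint. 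The descent necessarily stops at a leaf, where the invariant gives $\alpha_z\le\alpha<\beta_z$: if $\alpha_z=\alpha$ (the case $\alpha\in\vals(T)$) that leaf is appended and the union becomes $[\alpha,\beta_z)\cup[\beta_z,1)=[\alpha,1)=I$, while if $\alpha_z<\alpha<\beta_z$ (the case $\alpha\notin\vals(T)$) the conceptual child $z''$ with $I_{z''}=[\alpha,\beta_z)$ is appended instead, again yielding union $[\alpha,1)$.

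Part (ii) is bookkeeping on top of the derivation in \eqs{pIz:expand}{pIz}, where it is shown that $p_{I_z}(\vtheta)=\tfrac{S_z}{S_\rootnode}P_z$ with $P_z=\prod_{z'\supset z}e^{s_{z'}/b}$, together with $p_{I_{z''}}(\vtheta)=\tfrac{\beta_z-\alpha}{\beta_z-\alpha_z}\,p_{I_z}(\vtheta)$ for the conceptual node. I would check that the algorithm maintains $P_z$ incrementally --- $P_\rootnode=1$ and $P_{\lt z}=P_{\rt z}=P_z\cdot e^{s_z/b}$ --- and reads $S_z$ and $S_\rootnode$ off the stored annotations, so each $p_{I_z}(\vtheta)$ is produced with $\order(1)$ arithmetic as $z$ is visited and their running sum is the returned price. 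For the running time, the descent follows one root-to-leaf path, which by height balance has length $\order(\log n)$ for $n$ the current number of leaves; each visited node contributes $\order(1)$ work (one comparison, the $P_z$ update, an optional append to $\Z$, one accumulation) and $\card{\Z}=\order(\log n)$. Finally $n=\order(\nvals)$ because the distinct endpoint values ever used in executed transactions partition $[0,1)$ into $\order(\nvals)$ elementary cells with one leaf per cell, so $\order(\log n)=\order(\log\nvals)$. I expect the loop-invariant argument of part (i) --- tracking the telescoping union $[\beta_z,1)$ and the case analysis at the terminal leaf --- to be the only genuinely delicate step; part (ii) and the timing are routine given the earlier equations and the balance property.
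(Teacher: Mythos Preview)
Your proposal is correct and follows essentially the same approach as the paper's (very terse) proof: correctness from the binary-search property ensuring that the accumulated nodes cover $I$, price formula from \eqs{pIz:expand}{pIz}, and the time bound from height balance. One small inaccuracy worth fixing: the while loop in \Algo{price} also terminates at an \emph{inner} node $z$ whenever $\alpha_z=\alpha$, not only at a leaf; your invariant and the final accumulation $\tfrac{\beta_z-\alpha}{\beta_z-\alpha_z}\cdot P S_z/S_\rootnode$ (which equals $p_{I_z}(\vtheta)$ since the fraction is $1$) handle that case without change, but the sentence ``the descent necessarily stops at a leaf'' should be adjusted accordingly.
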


\begin{algorithm}[t!]
	\small
	\caption{Query price of bundle security for an interval $I=[\alpha,1)$.}
	\label{algo:price}
	\begin{algorithmic}[1]
		\Statex \textbf{Input:}~%
		Interval $I=[\alpha,1)$, $\alpha\in\Omega$, LMSR tree $T$.
		\Statex \textbf{Output:}~%
		Price of bundle security for $I$.
		\medskip
		\State Initialize $z\gets\rootnode$, $P\gets 1$, $\pricevar\gets 0$
		\While {$\alpha_z\ne\alpha$ \textbf{and} $z$ is not a leaf}
		\State $P \gets P e^{s_z/b}$
		\If {$\alpha < \alpha_{\rt{z}}$}
		\State $\pricevar\gets\pricevar + P S_{\rt{z}}/S_{\rootnode}$
		\State $z\gets \lt{z}$
		\Else
		\State $z\gets \rt{z}$
		\EndIf
		\EndWhile
		\State \Return $\pricevar+ \frac{\beta_z-\alpha}{\beta_z-\alpha_z}\cdot
		P S_z/S_{\rootnode}$
	\end{algorithmic}
\end{algorithm}
\vspace{-1ex}

\subsection{Buy Transactions}
We next implement $\buy([\alpha,1),s,T)$ while maintaining the LMSR tree properties.
The main challenge here is to simultaneously maintain \emph{partial-normalization correctness} and \emph{height balance}.
We address this by adapting AVL-tree rebalancing.

We begin by considering the case $\alpha\in\vals(T)$.
Similar to price queries, we conduct binary search for $\alpha$ to obtain the set of nodes $\Z$ that covers $I=[\alpha,1)$.
We update the values of $s_z$ across $z\in\Z$ by adding $s$, and obtain $T'$ that has the same structure as $T$ with the updated share quantities
\[
s'_{z}=
\begin{cases}
s_z+s&\text{if $z\in\Z$}
\\s_z  &\text{otherwise.}
\end{cases}
\]
Thus, the resulting market state is
\[
\vtheta(T')=\sum_{z\in T'} s'_z\vone_{I_z}
=\sum_{z\in T}s_z\vone_{I_z}+\sum_{z\in\Z}{s\vone_{I_z}}
=\vtheta(T) + s\vone_I.
\]
We then rely on the recursive relationship defined in \eq{Sz:rec} to update the partial normalization constants $S_z$.
It suffices to update the ancestors of the nodes $z\in\Z$, all of which lie along the search path to $\alpha$, and each update requires constant time.

When $\alpha\not\in\vals(T)$, we split the leaf $z$ that contains $\alpha\in[\alpha_z,\beta_z)$ before adding shares to $\rt{z}$. 
This may violate the \emph{height-balance property}.
Similar to the AVL insertion algorithm \cite [Section 6.2.3]{Knuth}, we fix any imbalance by means of \emph{rotations}, as we go back along the search path.
Rotations are operations that modify small portions of the tree, and at most two rotations are needed to rebalance the tree~\cite{AVL62}.
We show in Appendix A.2, Lemma 1, that in each rotation, only a constant number of nodes needs to be updated to preserve the \emph{partial-normalization correctness}.
Thus, the overall running time of the $\buy$ operation, presented in Algorithm~\ref{algo:buy}, is $\order(\log \nvals)$.\looseness=-1
%
%

\begin{theorem}
	\label{thm:lmsr_buy}
	Algorithm~\ref{algo:buy} 
    implements $\buy$ in time $\order(\log \nvals)$.
\end{theorem}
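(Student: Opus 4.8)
The plan is to verify that Algorithm~\ref{algo:buy} (i) produces the correct market state $\vtheta(T') = \vtheta(T) + s\vone_I$, (ii) restores all three LMSR-tree invariants, and (iii) does so in $\order(\log\nvals)$ time. I would split the argument according to whether $\alpha\in\vals(T)$ or not, handling the structurally simpler case first.

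\textbf{Case $\alpha\in\vals(T)$.} First I would reuse the binary-search construction from the $\price$ analysis to extract the set $\Z$ of $\order(\log n)$ nodes whose intervals are disjoint and cover $I=[\alpha,1)$; this is exactly the same descent, so correctness of the covering ($\bigcup_{z\in\Z}I_z = I$, pairwise disjoint) is inherited. Adding $s$ to each $s_z$ for $z\in\Z$ then gives $\vtheta(T')=\vtheta(T)+s\sum_{z\in\Z}\vone_{I_z}=\vtheta(T)+s\vone_I$, as displayed in the text. Next I would argue partial-normalization correctness is restored by updating $S_z$ bottom-up along the search path: the only leaves/inner nodes whose defining quantities changed are those in $\Z$ (whose $s_z$ changed) and their ancestors (whose children's $S$-values changed); all such nodes lie on the single root-to-$\alpha$ path, and recomputing $S_z$ via \eq{Sz:rec} in reverse BFS/DFS order along that path touches each in $\order(1)$ time. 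Height balance is untouched since no structural change occurs. This yields the $\order(\log\nvals)$ bound in this case.

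\textbf{Case $\alpha\notin\vals(T)$.} Here the descent ends at a leaf $z$ with $\alpha_z<\alpha<\beta_z$; I would split $z$ into an inner node with children carrying intervals $[\alpha_z,\alpha)$ and $[\alpha,\beta_z)$, initialize their $s$-values to the old $s_z$ and their $S$-values by the leaf formula in \eq{Sz:rec}, then proceed as in the first case with the new right child adjoined to $\Z$. The state identity $\vtheta(T')=\vtheta(T)+s\vone_I$ still holds because the split does not change $\theta_\omega$ for any $\omega$ (the new children reproduce the old constant value on $I_z$), and then adding $s$ to the cover behaves as before. The new subtlety is that the split increases the height of $z$ by one and may break height balance along the ancestor chain; I would invoke the standard AVL fact that at most two rotations along the search path suffice to restore balance, and appeal to the cited Appendix~A.2, Lemma~1, that each rotation requires updating only $\order(1)$ nodes' $S$-values to re-establish partial-normalization correctness (since a rotation permutes a constant-size local subtree, and $S_z$ depends only on $s_z$ and the $S$-values of $z$'s children). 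Composing: the descent, the split, the share updates, the rotations, and the bottom-up $S$-refresh are each $\order(1)$ per node on a single root-to-leaf path of length $\order(\log\nvals)$, giving the claimed running time.

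\textbf{Main obstacle.} The delicate part is the interaction between rotations and the partial normalization constants: ordinary AVL rotations only need to fix height and subtree-size annotations, but here each rotation must also recompute $S_z$ for the rotated nodes \emph{and} one must check that no further up-the-path $S$-values become stale beyond those already on the refresh path. I would therefore lean on the locality of \eq{Sz:rec}---$S_z$ is a function of $s_z$ and the $S$-values of its (at most two) children only---to confine the damage of a rotation to its constant-size participant set, with the single bottom-up pass afterward fixing the strict ancestors; the formal bookkeeping of which $\order(1)$ nodes these are is exactly what Appendix~A.2, Lemma~1 supplies, so the proof reduces to citing it and assembling the pieces.
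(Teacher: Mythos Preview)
Your proposal is correct and follows essentially the same approach as the paper: shares are added to the nodes forming a cover of $I$, the invariants are restored along the search path, Lemma~1 handles the rotation step, and the running time follows from height balance. One minor inaccuracy worth fixing: in Algorithm~\ref{algo:buy} the split creates new leaves with $s=0$ (via \textsc{NewLeaf}) while the original node $z$ retains its $s_z$ and becomes an inner node---it does not push $s_z$ down to the children as you describe; either convention preserves $\theta_\omega$, but since you are proving correctness of the stated algorithm your description should match it.
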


\begin{algorithm}[t!]
	\small
	\caption{Buy $s$ shares of bundle security for an interval $I=[\alpha,1)$.}
	\label{algo:buy}
	\begin{algorithmic}[1]
		\algnotext{EndProcedure}
		\Statex \textbf{Input:}~%
		Quantity $s\in\R$, interval $I=[\alpha,1)$, $\alpha\in\Omega$, LMSR tree $T$.
		\Statex \textbf{Output:}~%
		Tree $T$ updated to reflect the purchase of $s$ shares of $\vone_I$.
		\medskip
		\State Define subroutines:
		\Statex \algind \Call{NewLeaf}{$\alpha_0,\beta_0$}:
		return a new leaf node $z$ with
		\Statex \algind\algind
		$I_{z}=[\alpha_0,\beta_0)$,
		$h_{z}=0$, $s_{z}=0$, $S_{z}=(\beta_0-\alpha_0)$
		\Statex \algind \Call{ResetInnerNode}{$z$}:
		reset $h_z$ and $S_z$ based on the children of $z$ 
		\Statex \algind\algind
		$h_{z}\gets 1+\max\set{h_{\lt{z}},h_{\rt{z}}}$,
		$S_z\gets e^{s_z/b}(S_{\lt{z}}+S_{\rt{z}})$
		\Statex \algind \Call{AddShares}{$z,s$}:
		increase the number of shares held in $z$ by $s$
		\Statex \algind\algind
		$s_z\gets s_z+s$, $S_z\gets e^{s/b} S_z$
		\medskip
		\State Initialize $z \gets \rootnode$
		\While {$\alpha_z \neq \alpha$ \textbf{and} $z$ is not a leaf}
		\Comment{add $s$ shares to $z\in\Z$}
		\If {$\alpha < \alpha_{\rt{z}}$}
		\State \Call{AddShares}{$\rt{z},s$}
		\State $z \gets \lt{z}$
		\Else
		\State $z \gets \rt{z}$
		\EndIf
		\EndWhile
		\If {$\alpha_z < \alpha$}
		\Comment{split the leaf $z$}
		\State $\lt{z}\gets{}$\Call{NewLeaf}{$\alpha_z,\alpha$},  $\rt{z}\gets{}$\Call{NewLeaf}{$\alpha,\beta_z$}
		\State $z\gets\rt{z}$
		\EndIf
		\State \Call{AddShares}{$z,s$}
		\While {$z$ is not a $\rootnode$}
		\Comment{trace the binary-search path back} 
		\State $z\gets\pt{z}$
		\If {$\abs{h_{\lt{z}}-h_{\rt{z}}}\ge 2$}
		\Comment{restore height balance}
		\State Rotate $z$ and possibly one of its children
		\Statex ~\hphantom{\textbf{while}} 
		(details in Appendix A.2, Algorithm 5)
		\EndIf
		\State\Call {ResetInnerNode}{$z$}
		\Comment{update $h_z$ and $S_z$}
		\EndWhile
	\end{algorithmic}
\end{algorithm}

\paragraph{Remarks.}
We show that $\price$, $\cost$ and $\buy$ can be implemented in time $\order(\log\nvals)$, which is bounded above by the log of the number of $\buy$ transactions $\order(\log \nbuy)$ and the bit precision of the outcome $\order(\log N)=\order(K)$.%
\footnote{Clearly, $\nvals\le 2\nbuy$ with each $\buy$ transaction introducing at most two new endpoint values.
The value of $\nvals$ is also bounded above by $N+1$ since the interval endpoints are always in $\Omega\cup\set{1}$.}
We note that none of the operations require the knowledge of $K$, so the market in fact supports queries with arbitrary precision.
However, the market precision affects the worst-case loss bound for the market maker, which is $\order(\log N)=\order(K)$.
Next section presents a different construction that achieves a \emph{constant} worst-case loss independent of the market precision.

\section{A Multi-resolution Linearly Constrained Market Maker}
\label{sec:const-loss}
We introduce our second design, referred to as the \emph{multi-resolution linearly constrained market maker} (multi-resolution LCMM).
The design is based on the LMSR, but it enables more flexibility by assigning two or more parallel LMSRs with different liquidity parameters to orchestrate submarkets that offer interval securities at different resolutions.
However, running submarkets independently can create arbitrage opportunities, as any interval expressible in a coarser market can also be expressed in a finer one.
To maintain coherent prices, we design a matrix that imposes linear constraints to tie market prices among different submarkets to support the efficient removal of any arbitrage opportunity, following~\citet{DudikLaPe12}.
We first define the multi-resolution LCMM and its properties, and show that $\price$, $\cost$ and $\buy$ can be implemented in time $\order(\log N)$.\looseness=-1

\subsection{A Multi-resolution LCMM for $[0,1)$}
\subsubsection{A Multi-resolution  Market}
A binary search tree remains at the core of our multi-resolution market construction.
Unlike a log-time LMSR that uses a self-balancing tree, it builds upon a \emph{static} one, where each level of the tree represents a submarket of intervals, forming a finer and finer partition of $[0,1)$.
We start with an example of a market that offers interval securities at two resolutions.\looseness=-1
\newcommand{\Hopen}[1]{\left[#1\right)}
\newcommand{\bigHopen}[1]{\bigl[#1\bigr)}
\newcommand{\BigHopen}[1]{\Bigl[#1\Bigr)}
\begin{example}[Two-level market for $[0, 1)$]
	\label{ex:twolevel}
	We consider a market composed of two submarkets, indexed by $\I_1=\set{11,12}$ and $\I_2=\set{21,22,23,24}$, which partition $[0, 1)$ into interval events at two levels of coarseness:
\vspace{-6pt}
	\begin{gather*}
	I_{11}=\bigHopen{0,\tfrac12},
	I_{12}=\bigHopen{\tfrac12,1};
\\
	I_{21}=\bigHopen{0,\tfrac14},
	I_{22}=\bigHopen{\tfrac14,\tfrac12},
	I_{23}=\bigHopen{\tfrac12,\tfrac34},
	I_{24}=\bigHopen{\tfrac34,1}.
	\end{gather*}
	The market provides six interval securities $\phi_{11},\dotsc,\phi_{24}$ associated with the corresponding interval events, i.e., $\I=\I_1\biguplus\I_2$ and $\card{\I}=6$.
\end{example}
We extend \Ex{twolevel} to multiple resolutions.
We represent the initial independent submarkets with a \emph{complete binary tree} $T^*$ of depth $K$, which corresponds to the bit precision of the outcome $\omega$.
Let $\Z^*$ denote the set of nodes of $T^*$ and $\Z_k$ for $k\in\set{0,1,\dots,K}$ the set of nodes at each level.
$\Z_0$ contains the root associated with $I_{\rootnode}=[0,1)$, and each consecutive level contains the children of nodes from the previous level, which split their corresponding parent intervals in half.
Thus, level $k$ partitions $[0,1)$ into $2^k$ intervals of size $2^{-k}$ and the final level $\Z_K$ contains $N=2^K$ leaves.

We index interval securities by nodes, with their payoffs defined by $\phi_z(\omega)=1\set{\omega\in I_z}$.
We partition securities into submarkets corresponding to levels, i.e., $\I_k = \Z_k$ for $k \leq K$, where $\card{\I_k}=2^k$ and $\I=\biguplus_{k\le K} \I_k$.
For each submarket, we define the LMSR cost function $C_k$ with a \emph{separate} liquidity parameter $b_k>0$: 
\begin{equation}
	\label{eq:level_cost_function}
	C_k(\vtheta_k)=b_k\log\Parens{\sum_{z\in\Z_k} e^{\theta_z/b_k}}.
\end{equation}

%

\subsubsection{A Linearly Constrained  Market Maker}
Following the above multi-resolution construction, the overall market has a \emph{direct-sum cost} $\Cx(\vtheta)=\sum_{k\le K} C_k(\vtheta_k)$,
which corresponds to pricing securities in each block $\I_k$ independently using $C_k$.
However, as there are logical dependencies between securities in different levels, independent pricing may lead to incoherent prices among submarkets and create arbitrage opportunities.
\begin{example}[Arbitrage in a two-level market]
	\label{ex:arbitrage}
	Continuing Example~\ref{ex:twolevel}, we define separate LMSR costs, where $b_1=1$ and $b_2=1$:
	\[
	C_1(\vtheta_1)=\log\Parens{e^{\theta_{11}}\!+e^{\theta_{12}}};\;\;
	C_2(\vtheta_2)=\log\Parens{e^{\theta_{21}}\!+e^{\theta_{22}}\!+e^{\theta_{23}}\!+e^{\theta_{24}}}.
	\]
	The direct-sum market $\Cx(\vtheta)=C_1(\vtheta_1)+C_2(\vtheta_2)$ allows incoherent prices.
	For example, after buying some shares of security $\phi_{21}$ associated with $I_{21}=\bigHopen{0,\tfrac14}$ in submarket $\I_2$, the market can have
	\[
	\px_{11}(\vtheta)=0.5;
	\quad
	\px_{21}(\vtheta)+\px_{22}(\vtheta)=0.6.
	\]
	These prices are incoherent, i.e., do not correspond to probabilities of $I_{11}$, $I_{21}$, $I_{22}$, because under any probability distribution over $\Omega$, we must have $\Pr[I_{11}]=\Pr[I_{21}]+\Pr[I_{22}]$ and $\Pr[I_{12}]=\Pr[I_{23}]+\Pr[I_{24}]$. Thus, a coherent price vector $\vmu\in\R^{\card{\I}}$ must satisfy linear constraints $\mu_{11}-\mu_{21}-\mu_{22}=0$ and $\mu_{12}-\mu_{23}-\mu_{24}=0$, which can be also written as $\va_1^\top\vmu=0$ and $\va_2^\top\vmu=0$ where
	\[
	\va_1=(1,0,-1,-1,0,0)^\top
	\quad \text{and} \quad
	\va_2=(0,1,0,0,-1,-1)^\top.
	\]
	We refer to $\A=(\va_1,\va_2) \in \R^{\card{\I}\times 2}$ as the constraint matrix.
\end{example}
We extend Example~\ref{ex:arbitrage} to specify price constraints in a multi-resolution market. Later we will show how the constraint matrix can be used to remove arbitrage arising from the constraint violations.

Recall that $\M$ denotes a \emph{coherent price space}, where any expected payoff lies in the convex hull of $\{\vphi(\omega)\}_{\omega \in \Omega}$.
For the multi-resolution market, we specify a set of \emph{homogeneous linear equalities} describing a superset of~$\M$.
%
\begin{equation}
\label{eq:Amu=0}
\M\subseteq\set{\vmu\in\R^{\card{\I}}:\:\A^\top\vmu=\vzero}.
\end{equation}

We design the constraint matrix $\A$ to ensure that any pair of submarkets is price coherent, meaning that any interval event $I \subseteq\Omega$ gets the same price on all levels that can express it.
Therefore, for each \emph{inner node} $y \in \Z_l$ where $l<K$, we have
\[
\mu_y=\sum_{z\in\Z_k:\:z\subset y} \mu_z \qquad
\text{for any $l<k\leq K$.}
\]
For algorithmic reasons (as we will see in Section~\ref{sec:lcmm_buy_cost}), we further tie the price of $y$ to the prices of \emph{all} of $y$'s descendants and weight each level by its liquidity parameter $b_k$:
\begin{equation}
\label{eq:constr}
\,\,\underbrace{\!\!\BigParens{\sum_{k>\ell} b_{k}}\!\!}_{B_\ell}\,\, \mu_y=
\sum_{k>\ell}
\BigParens{
	b_k\sum_{z\in\Z_k:\:z\subset y} \mu_z
}.
\vspace{-1ex}
\end{equation}
%

Now we can formally define the constraint matrix $\A$.
Let $\Y^*=\Z^*\wo\Z_K$ be the set of inner nodes of $T^*$ and let $\level(z)$ denote the level of a node~$z$.
The matrix $\A\in\R^{\card{\Z^*}\times\card{\Y^*}}$ contains the constraints from \eq{constr} across all $y\in\Y^*$:
\begin{equation}
\label{eq:A}
A_{zy}=
\begin{cases}
B_{\level(z)}&\text{if $z=y$,}
\\
-b_{\level(z)}&\text{if $z\subset y$,}
\\
0&\text{otherwise.}
\end{cases}
\end{equation}

Arbitrage opportunities arise if the price of bundle $\va_j$ differs from zero, where $\va_j$ denotes the $j$th column of~$\A$.
Traders profit by buying a positive quantity of $\va_j$ if its price is negative, and selling otherwise.
Thus, the constraint matrix $\A$ gives a recipe for arbitrage removal. We provide the intuition
for this in the two-level market, and then give the definition of the multi-resolution LCMM.
\begin{example}[Arbitrage removal in a two-level market]
	\label{ex:lcmm}
	Continuing Example~\ref{ex:arbitrage},
	the prices $\vpx(\vtheta)$ violate the constraint ${\A^\top\vmu=\vzero}$, because $\va_1^\top\vpx(\vtheta)=\px_{11}(\vtheta)-\px_{21}(\vtheta)-\px_{22}(\vtheta) = 0.5-0.6 \neq 0$.
	The vector $\va_1$ reveals an arbitrage opportunity: buy the security $\phi_{11}$ (at the initial price $0.5$) and simultaneously sell securities $\phi_{21}$ and $\phi_{22}$ (at the initial price $0.6$), i.e., buy bundle $\va_1$. Since under any outcome $\omega$, the payout for the bundle $\va_1$ is $0$, this is initially profitable. However, buying $\va_1$ will increase the price of $\phi_{11}$ and decrease the prices of $\phi_{21}$ and $\phi_{22}$. Once a sufficiently large quantity $s$ of shares of $\va_1$ is bought, this form of arbitrage is removed and we have $\va_1^\top\vpx(\smash{\tvtheta})=0$ in a new state $\smash{\tvtheta}=\vtheta+s\va_1=\vtheta+\A\veta$, where $\veta:=(s,0)^\top$.
%
\end{example}
%
A linearly constrained market maker (LCMM)~\cite{DudikLaPe12} leverages violated constraints
similarly as in Example~\ref{ex:lcmm} to remove arbitrage, and then returns the arbitrage
proceeds to the trader. Formally, an LCMM is described by the cost function
\begin{equation}
\label{eq:lcmm}
C(\vtheta) = \inf_{\veta\in\R^{\card{\Y^*}}}\Cx(\vtheta+\A\veta).
\end{equation}
It relies on the direct-sum cost $\Cx$, but with each trader purchase $\vdelta$ that causes incoherent prices, an LCMM automatically seeks the most advantageous cost for the trader by buying bundles $\A\vdelta_\textup{arb}$ on the trader's behalf to remove arbitrage.
\emph{Trader purchases are accumulated as the state $\vtheta$, and automatic purchases made by the LCMM are accumulated as $\A\veta$.}

We note that the purchase of bundle $\A\vdelta_\textup{arb}$ has no effect on the trader's payoff, since $(\A\vdelta_\textup{arb})^\top\vphi(\omega)=0$ for all $\omega\in\Omega$ thanks to \eq{Amu=0} and the fact that $\vphi(\omega)\in\M$.
However, the purchase of $\A\vdelta_\textup{arb}$ can lower the cost, so optimizing over $\vdelta_\textup{arb}$ benefits the traders, while maintaining the same worst-case loss guarantee for the market maker as $\Cx$ \cite{DudikLaPe12}.
Consider a fixed $\vtheta$ and the corresponding $\sveta$ minimizing \eq{lcmm}.
We calculate prices as
$
\vp(\vtheta)=\nabla C(\vtheta)=\nabla\Cx(\vtheta+\A\sveta).
$
By the first order optimality, $\sveta$ minimizes \eq{lcmm} if and only if $\A^\top\bigParens{\nabla\Cx(\vtheta+\A\sveta)}=\vzero$.
This means that $\A^\top\vp(\vtheta)=\vzero$, and thus arbitrage opportunities expressed by $\A$ are completely removed by the LCMM cost function $C$.

To implement an LCMM, we maintain the state $\tvtheta=\vtheta+\A\veta$ in the direct-sum market $\Cx$.
After updating $\vtheta$ to a new value $\vtheta'=\vtheta+\vdelta$, we seek to find $\veta'=\veta+\vdelta_\textup{arb}$ that removes all the arbitrage opportunities expressed by $\A$.
The resulting cost for the trader is
\[
\Cx(\vtheta'+\A\veta')-\Cx(\vtheta+\A\veta)
=\Cx(\tvtheta+\vdelta+\A\vdelta_\textup{arb})
-\Cx(\tvtheta)
.
\]

We finish this section by pointing out two favorable properties of the multi-resolution LCMM.
Above, we have established that LCMM removes all arbitrage opportunities expressed
by $\A$.
The next theorem shows that this actually removes all arbitrage.
The proof shows that consecutive levels are coherent, which by transitivity implies that the overall price vector is coherent (see Appendix A.3).\looseness=-1

\begin{theorem}
	\label{thm:lcmm_arb_free}
	A multi-resolution LCMM is arbitrage-free.
\end{theorem}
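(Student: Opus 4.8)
The plan is to argue that the LCMM removes not just the arbitrage opportunities encoded in the columns of $\A$, but all arbitrage, by showing the LCMM price vector $\vp(\vtheta)=\nabla\Cx(\tvtheta)$ is always coherent, i.e. lies in $\M$. Recall from the excerpt that a cost function is arbitrage-free if and only if its price vectors are always coherent (citing \cite{AbernethyChVa11}), so it suffices to establish coherence. We already know from the first-order optimality discussion that $\A^\top\vp(\vtheta)=\vzero$, so the price vector satisfies every constraint in \eq{constr}. The claim to prove is that these constraints, together with the fact that each block $\vp_k(\tvtheta)=\nabla C_k(\tvtheta_k)$ is itself a valid probability distribution over the $2^k$ intervals at level $k$ (being an LMSR price vector), forces the whole vector to be coherent.

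First I would observe that each block $\vp_k$ is coherent \emph{within} level $k$: the LMSR price vector $p_z(\tvtheta)=e^{\theta_z/b_k}/\sum_{z'\in\Z_k}e^{\theta_{z'}/b_k}$ is nonnegative and sums to one over $\Z_k$, hence equals $\Pr[\omega\in I_z]$ under the distribution on $\Omega$ that assigns to each of the $N/2^k$ outcomes in $I_z$ an equal share of $p_z$. The real content is reconciling distributions across levels. Here I would show that \eq{constr}, holding for \emph{all} inner nodes $y$ simultaneously, implies the pairwise-consistency identities $\mu_y=\sum_{z\in\Z_k,\,z\subset y}\mu_z$ for every $l<k\le K$. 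This is the step flagged in the excerpt: ``The proof shows that consecutive levels are coherent, which by transitivity implies that the overall price vector is coherent.'' Concretely, I would induct downward from level $K$: for a node $y$ at level $K-1$, \eq{constr} reads $b_K\mu_y=b_K(\mu_{\lt{y}}+\mu_{\rt{y}})$, giving parent-children consistency for the bottom two levels. Assuming consistency has been established for all levels below some level $\ell+1$, one can rewrite the right-hand side of \eq{constr} for a node $y$ at level $\ell$ as $\sum_{k>\ell}b_k\mu_y$ using the inductive hypothesis applied to $y$'s two children — after first peeling off the $k=\ell+1$ term — which then cancels $B_\ell\mu_y$ on the left except for exactly the relation $\mu_y=\mu_{\lt{y}}+\mu_{\rt{y}}$. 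Thus every parent price equals the sum of its two children's prices; transitively, $\mu_y$ equals the sum of the prices of all leaves under $y$, and in particular the leaf prices $(\mu_z)_{z\in\Z_K}$ form a probability distribution over $\Omega$ (nonnegative, summing to $\mu_\rootnode=1$ since the level-$0$ block is a single security with price $1$). That distribution witnesses coherence of the entire vector $\vmu=\vp(\vtheta)$, since $\mu_z=\Pr[\omega\in I_z]$ at every level.

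The main obstacle I anticipate is the algebra of the downward induction in \eq{constr}: the weighting by $b_k$ and the ``all descendants'' (rather than just immediate children) form of the constraint make it tempting to mishandle the telescoping. The clean way is to define $B_\ell=\sum_{k>\ell}b_k$ as in the excerpt and note $B_\ell=b_{\ell+1}+B_{\ell+1}$, then split the sum over $k>\ell$ in \eq{constr} at $k=\ell+1$, apply the inductive consistency hypothesis to the children of $y$ for the tail $k\ge\ell+2$, and collect terms; the $B_{\ell+1}$-weighted pieces cancel and what remains is $b_{\ell+1}\mu_y=b_{\ell+1}(\mu_{\lt{y}}+\mu_{\rt{y}})$. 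A secondary point to be careful about is the base of the induction and the boundary node at level $0$: $\mu_\rootnode=1$ is what pins the total mass to one, and it should be noted this follows because the level-$0$ LMSR block has a single outcome. Everything else — nonnegativity, the final appeal to the price-coherence characterization of no-arbitrage — is routine.
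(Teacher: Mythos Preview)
Your proposal is correct and follows essentially the same route as the paper: downward induction on levels using the constraints in \eq{constr} to establish parent--children consistency $\mu_y=\mu_{\lt{y}}+\mu_{\rt{y}}$, and then invoking the fact that the level-$K$ LMSR prices form a probability distribution over $\Omega$ to witness coherence of the whole vector. One small expository slip: in the inductive step the $B_{\ell+1}$-weighted pieces do not literally ``cancel'' (that would presuppose what you are proving); rather, both sides collapse to $B_\ell$ times the respective quantities and you divide through by $B_\ell>0$, exactly as the paper writes it, $B_\ell\mu_y=B_\ell(\mu_{\lt{y}}+\mu_{\rt{y}})$.
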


The multi-resolution LCMM also enjoys the \emph{bounded-loss} property.
For a suitable choice of liquidities, such as $b_k=\order(1/k^{2.01})$, it can achieve a \emph{constant} worst-case loss bound.
The proof uses the fact that the overall loss is bounded by the sum of losses of level markets, which are at most $b_k\log\,\card{\Z_k}=k b_k\log 2$.
\begin{theorem}
	Let $\set{b_k}_{k=1}^\infty$ be a sequence of positive numbers such that $\sum_{k=1}^\infty kb_k = B^*$ for some finite $B^*$. Then the multi-resolution LCMM with liquidity parameters $b_k$ for $k\le K$ guarantees the worst-case loss of the market maker of at most $B^*\log 2$, regardless of the outcome precision $K$.
	\label{thm:constant_loss}
	\vspace{1ex}
\end{theorem}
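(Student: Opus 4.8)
The plan is to bound the worst-case loss of the multi-resolution LCMM above by the worst-case loss of its underlying direct-sum cost $\Cx$, and then to decompose that bound level by level and apply the standard LMSR loss bound in each submarket.

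\emph{Step 1: reduce to the direct-sum market.} This is the general property that an LCMM inherits the worst-case loss of its direct-sum cost~\cite{DudikLaPe12}; concretely, I would show that for every state $\vtheta$ and every outcome $\omega$,
\[
\vtheta\inprod\vphi(\omega)-C(\vtheta)+C(\vzero)
\;\le\;
\sup_{\vtheta'\in\R^{\card{\I}}}\bigBracks{\vtheta'\inprod\vphi(\omega)-\Cx(\vtheta')+\Cx(\vzero)}.
\]
Two facts drive this. (i)~Taking $\veta=\vzero$ in \eq{lcmm} gives $C(\vzero)=\inf_{\veta}\Cx(\A\veta)\le\Cx(\vzero)$. (ii)~Since $\vphi(\omega)\in\M$ and $\M\subseteq\set{\vmu:\A^\top\vmu=\vzero}$ by \eq{Amu=0}, we have $\A^\top\vphi(\omega)=\vzero$, hence $\vtheta\inprod\vphi(\omega)=(\vtheta+\A\veta)\inprod\vphi(\omega)$ for every $\veta$. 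Writing $-C(\vtheta)=\sup_{\veta}\bigParens{-\Cx(\vtheta+\A\veta)}$ from \eq{lcmm} and substituting (i) and (ii) yields the displayed inequality; note this argument never assumes the infimum in \eq{lcmm} is attained. Taking the supremum over $\vtheta$ and $\omega$ shows the LCMM's worst-case loss is at most that of $\Cx$.

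\emph{Step 2: decompose $\Cx$ over levels and apply the LMSR bound.} Because $\Cx(\vtheta)=\sum_{k\le K}C_k(\vtheta_k)$ and the payoff vector splits as $\vphi(\omega)=\bigParens{\vphi_k(\omega)}_{k\le K}$, the quantity $\vtheta\inprod\vphi(\omega)-\Cx(\vtheta)+\Cx(\vzero)$ equals the sum over $k$ of the per-level quantities $\vtheta_k\inprod\vphi_k(\omega)-C_k(\vtheta_k)+C_k(\vzero)$, and the blocks $\vtheta_k$ vary independently; hence the worst-case loss of $\Cx$ is at most the sum over $k\le K$ of the worst-case losses of the level-$k$ markets. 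Each level-$k$ market is an LMSR complete market over the $2^k$-cell partition $\set{I_z}_{z\in\Z_k}$ of $\Omega$ --- for each $\omega$, $\vphi_k(\omega)$ is the indicator of the unique cell of $\Z_k$ containing $\omega$ --- so by the LMSR worst-case loss bound its loss is at most $b_k\log\card{\Z_k}=b_k\log 2^k=kb_k\log 2$~\cite{Hanson03} (for $k=0$ this is $b_0\log 1=0$). Summing, the worst-case loss of the multi-resolution LCMM is at most $\sum_{k=1}^{K}kb_k\log 2\le\bigParens{\sum_{k=1}^{\infty}kb_k}\log 2=B^*\log 2$, independently of $K$, which is the claim.

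\emph{Main obstacle.} The only delicate step is Step~1: handling the infimum in the LCMM cost without assuming it is attained, and verifying that the automatic arbitrage bundles $\A\veta$ carry zero payoff under every outcome --- via $\A^\top\vphi(\omega)=\vzero$ --- so that they change neither the trader's realized payoff nor, once $C(\vzero)\le\Cx(\vzero)$ is accounted for, the market maker's loss. Everything after that is the routine additive worst-case-loss decomposition of a direct-sum market together with the per-market $b\log\card{\Omega}$ bound for LMSR.
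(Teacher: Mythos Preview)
Your proposal is correct and follows essentially the same approach as the paper: bound the LCMM's worst-case loss by that of the direct-sum market $\Cx$, decompose $\Cx$'s loss as the sum of the per-level LMSR losses $b_k\log\card{\Z_k}=kb_k\log 2$, and then sum. The only difference is cosmetic: the paper dispatches your Step~1 by citing~\cite{DudikLaPe12} for the fact that the LCMM's worst-case loss is bounded by the sum of the component markets' losses, whereas you unpack this argument explicitly via $C(\vzero)\le\Cx(\vzero)$ and $\A^\top\vphi(\omega)=\vzero$.
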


\subsubsection{A Multi-resolution LCMM Tree}
We can now formally define the multi-resolution LCMM tree.
The market state of a multi-resolution LCMM is represented by vectors $\vtheta\in\R^{\card{\Z^*}}$
and $\veta\in\R^{\card{\Y^*}}$, whose dimensions can be intractably large (e.g., on the order of $2^K=N$).
However, since each LCMM operation involves only a small set of coordinates of $\vtheta$ and $\veta$, we only keep track of the coordinates accessed so far and represent them as an annotated subtree $T$ of $T^*$, referred to as an \emph{LCMM tree}.
\begin{definition}[LCMM Tree]
	An \emph{LCMM tree} $T$ is a full binary tree, where each node $z$ is annotated with $I_z=[\alpha_z,\beta_z)$, $\theta_z\in\R$, $\eta_z\in\R$, such that
	$I_{\rootnode}=[0,1)$, and for every inner node $z$:
	\[
	\alpha_z=\alpha_{\lt{z}},\quad
	\beta_{\lt{z}}=\alpha_{\rt{z}}=\frac{\alpha_z+\beta_z}{2},\quad
	\beta_{\rt{z}}=\beta_z.
	\]
	\label{def:lcmm_tree}
	\vspace{-2ex}
\end{definition}

The tree $T$ contains the coordinates of $\vtheta$ and $\veta$ accessed so far. Since $\vtheta$ and $\veta$ are initialized to zero, their remaining entries are zero.
We write $\vtheta(T)\in\R^{\card{\Z^*}}$ and $\veta(T)\in\R^{\card{\Y^*}}$ for the vectors represented by $T$.
To calculate prices, we maintain $\veta(T)$ that minimizes \Eq{lcmm}, or equivalently $\veta(T)$ that satisfies
$
\A^\top\vpx\bigParens{\vtheta(T)+\A\veta(T)}=\vzero.
$
If this property holds, we say that an LCMM tree $T$ is \emph{coherent}.


\subsection{Price Queries}
There are many ways to decompose an interval $I$ in a multi-resolution market, but they all yield the same price thanks to coherence.
The no-arbitrage property also guarantees that the price of $[\alpha,\beta)$ can be obtained by subtracting the price of $[\beta,1)$ from $[\alpha,1)$.
Therefore, we focus on pricing one-sided intervals of the form $I=[\alpha,1)$.

Let $T$ be a coherent LCMM tree and $\vtheta\coloneqq\vtheta(T)$ and $\veta\coloneqq\veta(T)$ be the vectors represented by $T$.
Let $\tvtheta=\vtheta+\A\veta$ be the corresponding state in $\Cx$, so the current security prices are $\vmu:=\vpx(\tvtheta)$.
As before, we identify a set of nodes $\Z$ that covers $I$, and then rely on price coherence to calculate each $\mu_z$ along the search path. 

Assume that $z$ is not a root node and we know the price of its parent.
Let $\sib{z}$ denote the sibling of $z$ and $k=\level(z)$.
We can then relate the price of $z$ to the price of $\pt{z}$:
\begin{align}
\label{eq:pxz:1}
\mu_z
&=\frac{\mu_z}{\mu_{\pt{z}}}\cdot\mu_{\pt{z}}
=\frac{\mu_z}{\mu_z+\mu_{\sib{z}}}\cdot\mu_{\pt{z}}
\\
\label{eq:pxz:2}
&
=\frac{e^{\ttheta_z/b_k}}{e^{\ttheta_z/b_k}+e^{\ttheta_{\sib{z}}/b_k}}\cdot\mu_{\pt{z}}.
\end{align}
\eq{pxz:1} follows by price coherence and \eq{pxz:2} follows by the price calculation in \eq{LMSR}.
Thus, we descend the search path to calculate each price $\mu_z$, beginning with $\mu_\rootnode=1$.
It remains to obtain $\ttheta_z$, for which we follow the construction of $\A$ in \eq{A}:
\begin{equation}
\label{eq:ttheta}
\ttheta_z = \theta_z+\sum_{y\in\Y^*} A_{zy}\eta_y
= \theta_z+B_{k}\eta_z-b_{k}\sum_{y\supset z} \eta_y.
\end{equation}
Plugging the above equation back in \eq{pxz:2}, we obtain\footnote{The factor $\exp\braces{-\sum_{y\supset z} \eta_y}=\exp\braces{-\sum_{y\supset\sib{z}} \eta_y}$ appears in both the numerator and the denominator after plugging \eq{ttheta} to \eq{pxz:2}, so it cancels out.}
\begin{equation}
\label{eq:pxz:3}
\mu_z
=\frac{\exp\BigParens{\frac{\theta_z+B_{k}\eta_z}{b_k}}}
{\exp\BigParens{\frac{\theta_z+B_{k}\eta_z}{b_k}}
	+
	\exp\BigParens{\frac{\theta_{\sib{z}}+B_{k}\eta_{\sib{z}}}{b_k}}
} \cdot \mu_{\pt{z}}.
\end{equation}
%

These steps yield Algorithm~\ref{algo:price:lcmm}.
The final line of the algorithm addresses the case when the search ends in the leaf $z$ with $\alpha_z<\alpha<\beta_z$.
Rather than expanding the tree to its lowest level $K$, we use price coherence again: since any strict descendant $z'\subset z$ on the path from $z$ to a leaf node $u \in \Z_K$ has $\theta_{z'}=\eta_{z'}=0$ by market initialization, all leaf nodes have the same price. 
Therefore, the price of $[\alpha,\beta_z)$ equals $\frac{\beta_z-\alpha}{\beta_z-\alpha_z}\cdot\mu_z$.

The length of search path for $\alpha$ is $\prec(\alpha)$, which denotes the bit precision of $\alpha$, defined as the smallest integer $k$ such that $\alpha$ is an integer multiple of $2^{-k}$.
As the computation at each node only requires constant time, the time to price $I = [\alpha, 1)$ is $\order(\prec(\alpha))$, which is bounded above by $\order(K)$.

\begin{theorem}
	Let $I=[\alpha,1)$, $\alpha\in\Omega$.
	Algorithm~\ref{algo:price:lcmm} implements $\price(I,T)$ in time $\order(\prec(\alpha))$.
	\label{thm:lcmm_price}
\end{theorem}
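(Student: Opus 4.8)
The plan is to prove correctness and running time of Algorithm~\ref{algo:price:lcmm} separately, with correctness being the substantive part. I would organize the argument around three claims: (i) the set $\Z$ constructed by the binary search is a disjoint cover of $I=[\alpha,1)$, so that $\mu_I=\sum_{z\in\Z}\mu_z$ by additivity of coherent prices; (ii) the running value of the variable $\mu$ maintained down the search path correctly equals $\mu_z$ at each visited node $z$; and (iii) the leaf correction factor $\frac{\beta_z-\alpha}{\beta_z-\alpha_z}$ is correct when $\alpha\notin\vals$. Claims (i) and (iii) are essentially the same as in the proof of Theorem~\ref{thm:lmsr_price} and can be dispatched by citing the binary-search property of Definition~\ref{def:lcmm_tree}, together with the observation that every strict descendant of the terminal leaf has $\theta=\eta=0$ by initialization, so by coherence all leaves below it share the price $\mu_z$, and the subinterval $[\alpha,\beta_z)$ gets a fraction $\frac{\beta_z-\alpha}{\beta_z-\alpha_z}$ of it.

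The heart of the proof is claim (ii), the invariant that the recursion computes $\mu_z=\vpx_z(\tvtheta)$ for the true minimizing $\veta$. First I would record that because $T$ is coherent, $\veta(T)$ minimizes \eq{lcmm}, hence $\A^\top\vpx(\tvtheta)=\vzero$, which is exactly the statement that consecutive levels are price-coherent; this is what licenses writing $\mu_{\pt{z}}=\mu_z+\mu_{\sib{z}}$ as in \eq{pxz:1}. Then \eq{pxz:2} is just the ratio formula from the LMSR prices \eq{LMSR} applied within level $k=\level(z)$ to the state $\tvtheta$. The one computation that needs care is the passage from \eq{pxz:2} to \eq{pxz:3}: substituting $\ttheta_z=\theta_z+B_k\eta_z-b_k\sum_{y\supset z}\eta_y$ from \eq{ttheta} and noting that $\{y:y\supset z\}=\{y:y\supset\sib{z}\}$ (a node and its sibling have exactly the same strict ancestors), so the common factor $\exp\{-\sum_{y\supset z}\eta_y\}$ cancels between numerator and denominator. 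This leaves the expression in \eq{pxz:3}, which is precisely what the algorithm evaluates at each step, with the base case $\mu_\rootnode=1$ holding since $I_\rootnode=[0,1)$. Induction down the search path then gives $\mu_z$ correctly at each node placed into $\Z$, and summing yields $\price(I,T)$.

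For the running time, I would argue that the while-loop follows a single root-to-leaf path whose length is the number of bits needed to locate $\alpha$ among the dyadic rationals, i.e.\ $\prec(\alpha)$ (the search branches left or right at level $k$ according to whether $\alpha$ lies in the left or right half of the current node's interval, and stops once $\alpha_z=\alpha$, which happens no later than level $\prec(\alpha)$). At each node the algorithm does a constant number of arithmetic and exponential operations — one update of $\mu$, possibly one addition to the running sum of prices for nodes appended to $\Z$ — and the final leaf correction is also $\order(1)$. Hence the total time is $\order(\prec(\alpha))$, which is $\order(K)$ since $\alpha\in\Omega$ is an integer multiple of $2^{-K}$.

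The main obstacle I anticipate is making the cancellation argument from \eq{pxz:2} to \eq{pxz:3} fully rigorous in the presence of the $\veta$ bookkeeping — specifically being careful that the ancestor sets of a node and its sibling coincide, that the tree $T$ actually stored may be a strict subtree of $T^*$ yet the implicit entries are all zero so \eq{ttheta} still holds, and that coherence of $T$ (an assumed input property, maintained by the $\buy$ operation described later) is genuinely what is needed to invoke \eq{pxz:1}. Everything else is a routine adaptation of the argument behind Theorem~\ref{thm:lmsr_price}.
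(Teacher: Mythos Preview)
Your proposal is correct and follows essentially the same approach as the paper: the paper's appendix proof is a terse two-sentence summary (cover of $I$ plus coherence gives correctness; path length equals $\prec(\alpha)$ gives the running time), and your three claims (i)--(iii) together with the cancellation argument for \eq{pxz:3} simply spell out in detail the derivation the paper already carries out in the body of Section~4.2.
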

\begin{algorithm}[t!]
	\small
	\caption{Query price of bundle security for an interval $I=[\alpha,1)$.}
	\label{algo:price:lcmm}
	\begin{algorithmic}[1]
		\Statex \textbf{Input:}~%
		Interval $I=[\alpha,1)$, $\alpha\in\Omega$,
		coherent LCMM tree $T$.
		\Statex \textbf{Output:}~%
		Price of bundle security for $I$.
		\medskip
		\State Initialize $z \gets \rootnode$, $\mu_z\gets 1$, $\pricevar\gets 0$
		\While {$\alpha_z \neq \alpha$ \textbf{and} $z$ is not a leaf}
		\State $z_l\gets\lt{z}$, $z_r\gets\rt{z}$, $k\gets\level(z_l)$
		\State $e_l\gets\exp\set{(\theta_{z_l}+B_{k}\eta_{z_l})/b_k}$,
		$e_r\gets\exp\set{(\theta_{z_r}+B_{k}\eta_{z_r})/b_k}$,
		\Statex ~\hphantom{\textbf{wi}} $\mu_{z_l}\gets\frac{e_l}{e_l+e_r}\mu_z$,
		$\mu_{z_r}\gets\frac{e_r}{e_l+e_r}\mu_z$
		\Comment{calculate prices by \Eq{pxz:3}}
		\If {$\alpha < \alpha_{\rt{z}}$}
		\State $z \gets z_l, \quad$$\pricevar \gets \pricevar + \mu_{z_r}$
		\Else
		\State $z \gets z_r$
		\EndIf
		\EndWhile
		\State \Return $\pricevar+ \frac{\beta_z-\alpha}{\beta_z-\alpha_z}\cdot\mu_z$
	\end{algorithmic}
\end{algorithm}

\subsection{Buy and Cost Operations}
\label{sec:lcmm_buy_cost}
\begin{algorithm}[t!]
	\small
	\caption{Buy $s$ shares of bundle security for an interval $I=[\alpha,1)$.}
	\label{algo:buy:lcmm}
	\begin{algorithmic}[1]
		\algnotext{EndProcedure}
		\algnotext{EndFunction}
		\Statex \textbf{Input:}~%
		Quantity $s\mkern-5mu\in\mkern-3mu\R$, interval $I\!=\![\alpha\mkern-1mu,\mkern-2mu 1)$, $\alpha\mkern-5mu\in\mkern-3mu\Omega$,
		coherent LCMM tree $T$.
		\Statex \textbf{Output:}~%
		Cost of $s$ shares of bundle security for $I$, the updated tree $T$.
		\medskip
		\State Define subroutines:
		\Statex \algind \Call{NewLeaf}{$\alpha_0,\beta_0$}:
		return a new leaf node $z$ with
		\Statex \algind\algind
		$I_{z}=[\alpha_0,\beta_0)$,
		$\theta_{z}=0$, $\eta_{z}=0$
		\medskip
		\Statex \algind \Call{RemoveArbitrage}{$y,\muother$}: restore price coherence among
		\Statex ~\hphantom{De} submarkets $k \geq \level(y)$ following \Eq{arb_t} and update cost
		\Statex \algind\algind Let $\ell=\level(y)$, $y'=\sib{y}$, $t= \frac{b_\ell}{B_{\ell-1}}\log\Parens{\frac{1-\mu_y}{\mu_y}\cdot\frac{\muother}{1-\muother}}$
		\Statex \algind\algind $S=\mu_y e^{tB_\ell/b_\ell}+1-\mu_y$, $\Sother=\muother e^{-t}+1-\muother$
		\Statex \algind\algind $\eta_y \gets \eta_y + t$, $\mu_y\gets \mu_y e^{tB_\ell/b_\ell}/S$, $\mu_{y'}\gets \mu_{y'}/S$
		\Statex \algind\algind $\costvar\gets\costvar + (b_\ell\log S) + (B_\ell\log\Sother)$
		\medskip
		\Statex \algind \Call{AddShares}{$z,s$}:
		increase shares held in $z$ by $s$, update cost, and
		\Statex ~\hphantom{De} restore price coherence among submarkets $k \geq \level(z)$
		\Statex \algind\algind Let $\ell=\level(z)$, $z'=\sib{z}$, $\muother=\mu_z$, $S = \mu_z e^{s/b_\ell}+1-\mu_z$
		\Statex \algind\algind $\theta_{z} \gets \theta_{z}+s$
		\Statex \algind\algind $\costvar\gets\costvar + (b_\ell\log S)$
		\Statex \algind\algind $\mu_z\gets \mu_z e^{s/b_\ell}/S$, $\mu_{z'}\gets \mu_{z'}/S$
		\Statex \algind\algind \Call{RemoveArbitrage}{$z,\muother$}
		\medskip
		\State Initialize $z \gets \rootnode$, $\mu_z\gets 1$, a global variable $\costvar\gets 0$
		\While {$\alpha_z \neq \alpha$}
		\If {$z$ is a leaf}
		\State $\lt{z}\gets{}$\Call{NewLeaf}{$\alpha_z,\,\frac12(\alpha_z+\beta_z)$},
		\Statex ~\hphantom{\textbf{whiiw}}$\rt{z}\gets{}$\Call{NewLeaf}{$\frac12(\alpha_z+\beta_z),\,\beta_z$}
		\EndIf
		\State Calculate $\mu_{\lt{z}}$, $\mu_{\rt{z}}$, and update $z$ according to $\alpha$
        \Statex\algind (same as \Algo{price:lcmm} lines 3-8)
		\EndWhile
		\State \Call{AddShares}{$z,s$}
		\While {$z$ is not a $\rootnode$}
		\Comment{remove arbitrage up the search path}
		\State $z'\gets\sib{z}$, $y\gets\pt{z}$
		\If {$z'=\rt{y}$}
		\State \Call{AddShares}{$z',s$}
		\Comment{add shares to $z \in \Z$}
		\EndIf
		\State \Call{RemoveArbitrage}{$y$, $\mu_z+\mu_{z'}$}
		\State $z\gets y$
		\EndWhile
		\State \Return $\costvar$
%
	\end{algorithmic}
\end{algorithm}

Different from LMSR, the cost query for a multi-resolution LCMM cannot be directly derived from prices.
We instead augment $\buy$ to implement $\cost$ by executing $\buy$ and then reverting all the changes.
We focus on $\buy(I,s,T)$ for $I=[\alpha,1)$.
By buying $s$ shares of $[\alpha,1)$ and then $(-s)$ shares of $[\beta,1)$, we obtain buying $[\alpha,\beta)$.

We summarize the procedure in \Algo{buy:lcmm}, which performs $\buy(I,s,T)$ and keeps track of $\cost(I,s,T)$.
Similar to price queries, we start with a set of nodes $\Z$ that partition $I$, by searching for $\alpha$ and simultaneously calculating prices $\mu_z$ along the way (lines 3--6).

We then proceed back up the search path, adding $s$ shares to nodes within the cover $\Z$ (lines 7--13).
Consider one of such node $y\in\Z$ at level $\ell\coloneqq\level(y)$.
Increasing $\theta_y$ by $s$ creates price incoherence between the submarket at level $\ell$ and submarkets at all other levels.
We design \textsc{RemoveArbitrage} to remove any arbitrage opportunity between level $\ell$ and \emph{all finer levels} with $k>\ell$.
We show in Appendix A.6, Lemma 3, that in order to restore coherence, it suffices to update $\eta_y$ by a closed-form amount:
\begin{equation}
\label{eq:arb_t}
t= \frac{b_\ell}{B_{\ell-1}}\log\Parens{\frac{1-\mu_y}{\mu_y}\cdot\frac{\muother}{1-\muother}},
\end{equation}
where $\muother = \mu_{\lt{y}}+\mu_{\rt{y}}$ records the price of $y$ in all the finer levels.
This key algorithmic step is enabled by the arbitrage bundle $\va_y$, which corresponds to buying $\phi_y$ on the level $\ell$ while selling securities associated with all descendants of $y$, with their shares appropriately weighted by the respective liquidity values as specified in the constraint matrix $\A$.

The market remains incoherent between $\ell$ and \emph{all coarser levels} $k<\ell$.
Since the updates have been localized to the subtree rooted at $y$, we use Lemma 3 again to update $\eta_{\pt{y}}$ and restore coherence among all levels $k\ge\ell-1$ (line 12).
We continue in this manner back along the path to root to restore a coherent market.

The algorithm also tracks the total cost of the $\buy$ transaction by evaluating \eq{costE} in the component submarkets.
Note that costs in all submarkets with $k>\ell$ can be evaluated simultaneously thanks to the restored coherence.
Since the computations in each accessed node are constant time,
Algorithm~\ref{algo:buy:lcmm} runs in time $\order(\prec(\alpha))$.\looseness=-1

\begin{theorem}
	Let $I=[\alpha,1)$, $\alpha\in\Omega$.
	Algorithm~\ref{algo:buy:lcmm} implements a simultaneous $\buy(I,s,T)$ and $\cost(I,s,T)$ in time $\order(\prec(\alpha))$.
	\label{thm:lcmm_buy}
	\vspace{1ex}
\end{theorem}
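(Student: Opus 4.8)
The plan is to establish three things about Algorithm~\ref{algo:buy:lcmm}: (i) correctness of the descent phase, (ii) correctness of the bottom-up arbitrage-removal phase, and (iii) the $\order(\prec(\alpha))$ running time. For (i), I would first argue, exactly as in the price-query analysis leading to Theorem~\ref{thm:lcmm_price}, that the search for $\alpha$ visits a path of length $\prec(\alpha)$, possibly splitting one leaf into two \textsc{NewLeaf} children when $\alpha\notin\vals(T)$; the cover $\Z$ consists of the right children hanging off the left-turns of this path plus the terminal node, and these intervals are disjoint with union $[\alpha,1)$. The prices $\mu_z$ computed on the way down are correct by \eq{pxz:3}, using that $T$ is coherent on entry. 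Adding $s$ to $\theta_y$ for each $y\in\Z$ realizes the state change $\vtheta\mapsto\vtheta+s\vone_{[\alpha,1)}$ exactly as in the LMSR-tree argument, since $\sum_{y\in\Z}\vone_{I_y}=\vone_{[\alpha,1)}$.

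For (ii), the heart of the matter is invoking Lemma~3 (Appendix~A.6) correctly. After \textsc{AddShares}$(z,s)$ at a node $z$ of level $\ell$, the submarket at level $\ell$ is incoherent with all finer levels, but \textsc{RemoveArbitrage}$(z,\muother)$ with $\muother=\mu_z$ (its price across the finer levels) restores coherence among levels $k\ge\ell$ by updating $\eta_z$ by the closed-form $t$ in \eq{arb_t}; this is precisely the content of Lemma~3, and it relies on the design of the column $\va_z$ of $\A$ tying $z$ to \emph{all} its descendants with liquidity weights, which makes a single one-dimensional arbitrage trade suffice. Then I would argue inductively up the path: when we move to $y=\pt{z}$ and (if $\sib{z}$ is the right child, hence in $\Z$) call \textsc{AddShares}$(\sib{z},s)$, the subtree rooted at $y$ is internally coherent among all levels $>\level(y)$, so \textsc{RemoveArbitrage}$(y,\mu_{\lt{y}}+\mu_{\rt{y}})$ applies Lemma~3 once more to extend coherence down to level $\level(y)$. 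By the time $z$ reaches the root, coherence holds at all levels, so the returned tree is coherent; since arbitrage expressed by $\A$ is removed and (Theorem~\ref{thm:lcmm_arb_free}) that is all arbitrage, the post-state $\tvtheta$ minimizes \eq{lcmm} at $\vtheta+s\vone_{[\alpha,1)}$. The cost bookkeeping is correct because \textsc{AddShares} and \textsc{RemoveArbitrage} each add the increment of $\Cx$ in the affected submarkets via \eq{costE}: the $b_\ell\log S$ term for the level-$\ell$ change and the $B_\ell\log\Sother$ term for the simultaneous change in all finer levels, which is valid exactly because those finer levels were coherent and hence move in lockstep. Summing the telescoping increments along the path yields $\Cx(\tvtheta')-\Cx(\tvtheta)=C(\vtheta+s\vone_{[\alpha,1)})-C(\vtheta)$, the true LCMM cost.

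For (iii), the descent touches $\prec(\alpha)$ nodes with $\order(1)$ work each; the ascent touches the same $\prec(\alpha)$ nodes, and each \textsc{AddShares}/\textsc{RemoveArbitrage} call does $\order(1)$ arithmetic (one $\log$, a few $\exp$, constant updates to $\theta,\eta,\mu$, and a constant number of cost increments). Hence the total is $\order(\prec(\alpha))=\order(K)$. To get $\buy(I,s,T)$ for a two-sided interval $I=[\alpha,\beta)$ and its cost, run the one-sided procedure twice, for $+s$ shares of $[\alpha,1)$ and $-s$ shares of $[\beta,1)$, and add the two returned costs; for $\cost(I,s,T)$ alone, execute the $\buy$ and then undo every mutation (restoring split leaves, $\theta$, $\eta$), which is again $\order(\prec(\alpha))$.

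I expect the main obstacle to be step (ii): precisely stating and then correctly chaining the local coherence guarantee of Lemma~3 up the path, in particular verifying that after each \textsc{RemoveArbitrage} the subtree rooted at the current node is coherent among \emph{all} levels it spans (not just two consecutive ones), so that the next application higher up sees the hypothesis it needs, and that the $\muother$ argument passed in always equals the current price of that node aggregated over the finer levels. Getting the liquidity-weighted constraint $\va_y$ to make the one-parameter update in \eq{arb_t} exactly restore coherence — rather than merely reduce one constraint violation — is the delicate point, and it is where the specific form of $\A$ in \eq{A} is essential; everything else is routine telescoping and path-length counting.
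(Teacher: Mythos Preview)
Your proposal is correct and follows essentially the same approach as the paper's proof: shares are added to the cover $\Z$ of $I$, coherence is restored by applying Lemma~3 inductively up the search path, and the running time is the path length $\prec(\alpha)$. Your write-up is considerably more detailed than the paper's---in particular your explicit treatment of the cost telescoping via \eq{costE} and the inductive hypothesis needed at each \textsc{RemoveArbitrage} call---but the skeleton is identical; one minor inaccuracy is that the descent may split \emph{several} leaves (one per level until $\alpha_z=\alpha$), not just one, though this does not affect the argument.
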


\paragraph{Remarks.}
In Algorithms \ref{algo:price:lcmm} and \ref{algo:buy:lcmm}, we assume that each node $z$ can store a scalar $\mu_z$, which can be modified during the run to support price calculations but is disposed afterwards.
The only part of our algorithms that depends on $K$ are the cumulative liquidities $B_\ell=\sum_{k=\ell+1}^K b_k$.
To remove such dependence, we can use $B'_\ell=\smash{\sum_{k=\ell+1}^\infty b_k=B^*-\sum_{k=1}^\ell b_k}$, where $B^*=\sum_{k=1}^\infty b_k$. This has no impact on the correctness of our algorithms: if at a given time the largest level in the tree $T$ is $L$, we can simply view $T$ as a multi-resolution LCMM with $K=L+1$ and liquidities
$b_1,b_2,\dotsc,b_L,B'_L$.
The last level $K=L+1$ then corresponds to infinitely many mutually coherent markets $\set{C_k}_{k=L+1}^\infty$.
Thus, a multi-resolution LCMM can achieve a constant loss bound regardless of $K$ and support market operations for $I=[\alpha,\beta)$ in time $\order(\prec(\alpha)+\prec(\beta))$. 

\section{Discussion and Illustration}
\label{sec:exp}
We have proposed two cost-function-based market makers that support trading interval securities of arbitrary precision and execute market operations exponentially faster than previous designs.
In what situations is one preferable over the other?

The log-time LMSR enjoys better storage and runtime efficiency, because search paths in LMSR tree are shorter thanks to its height-balance property.
The log-time LMSR would therefore be computationally preferable, for example, when the designer expects betting interest to be concentrated on a smaller set of intervals.
However, the log-time LMSR implements a standard LMSR, which faces well-known design challenges, such as the requirement to set a suitable liquidity value and the precision of bets in advance.
Correctly setting these parameters often requires a good estimate of trader interest even before trading in the market starts.


On the other hand, the multi-resolution LCMM does not require a hard specification of the betting precision. Flexible pricing allows the designer to attenuate liquidity across different precisions in a way that best reflects the designer's information-gathering priorities.
%
For example, an LMSR that operates at precision $k=4$ with liquidity $b$ can be represented by an LCMM with the level liquidity values $\vb = (0, 0, 0, b, 0, 0, \dotsc)$.
Moreover, if the market designer expects most of the information at precision 4 but also wants to support bets up to precision 8, they could run an LCMM with the liquidity placed at two levels as $\vb = (0, 0, 0, b_4, 0, 0, 0, b_8)$.
By choosing different values $b_4$ and $b_8$, the market designer can express utility for information at different precision levels.

We empirically highlight such flexibility by showing how LCMM can interpolate between LMSRs at different resolutions, allowing the market to match the coarseness of traders' information.
We conduct agent-based simulation using the trader model with exponential utility and exponential-family beliefs~\cite{Abernethy2014,Dudik2017}.%
\footnote{We note that while our market makers support agents with any beliefs and utilities, the exponential trader model is convenient, as it allows a closed-form derivation of \emph{market-clearing price}~\cite{Abernethy2014,Dudik2017}, which can be viewed as a ``ground truth'' for the information elicitation.}
We defer the detailed trader model to Appendix B.1.
Agents trade with either an LMSR or a multi-resolution LCMM, and we are interested in evaluating market makers' performance in terms of \textit{price convergence error}, calculated as the relative entropy between the \textit{market-clearing price} (that is the price reached when agents only trade among themselves) and the price maintained by the market maker.
We operate in a market over $[0,1)$ and the outcome is specified with $K=10$ bits.
We consider budget-limited market makers, whose worst-case loss may not exceed a budget constraint $B$.
For LMSR at precision $k$, this means setting the liquidity parameter to $b=B/\log\parens{2^k}$.
Following our motivating example, we compare two LMSR markets at precision levels 4 and 8, denoted as $\texttt{LMSR}_{k=4}$ and $\texttt{LMSR}_{k=8}$, to an LCMM that evenly splits budget to precision levels 4 and 8, denoted as $\texttt{LCMM}_{50/50}$.%
\footnote{The LCMM has an infinite number of choices for its liquidity at each level. We choose $\texttt{LCMM}_{50/50}$ as an instance here to showcase its interpolation ability.}

\begin{figure}[t!]
	\centering
	\begin{subfigure}[t]{0.23\textwidth}
		\centering
		\includegraphics[width=\textwidth]{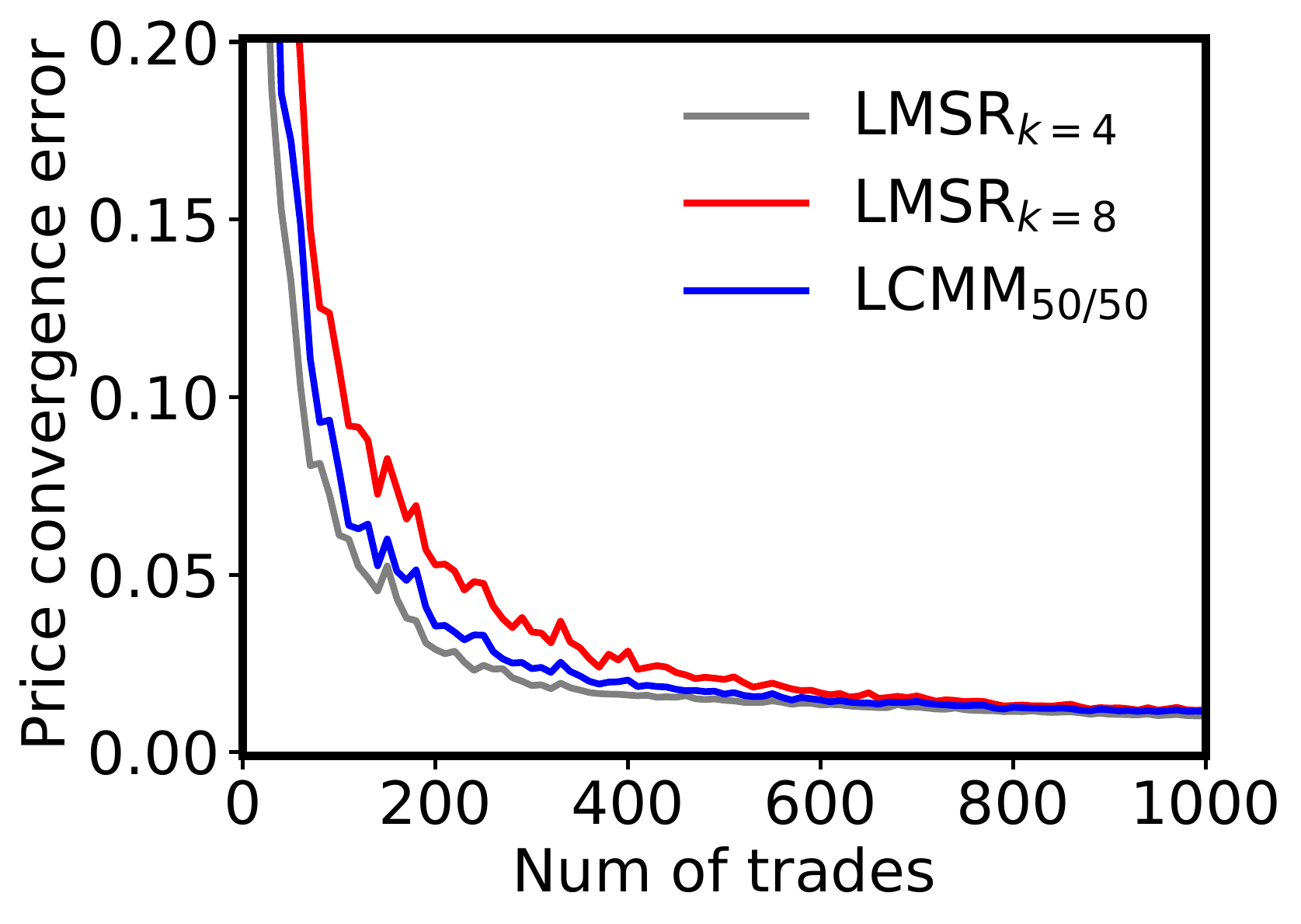}
		\caption{$k=4$.}
	\end{subfigure}
	\begin{subfigure}[t]{0.23\textwidth}
		\centering
		\includegraphics[width=\textwidth]{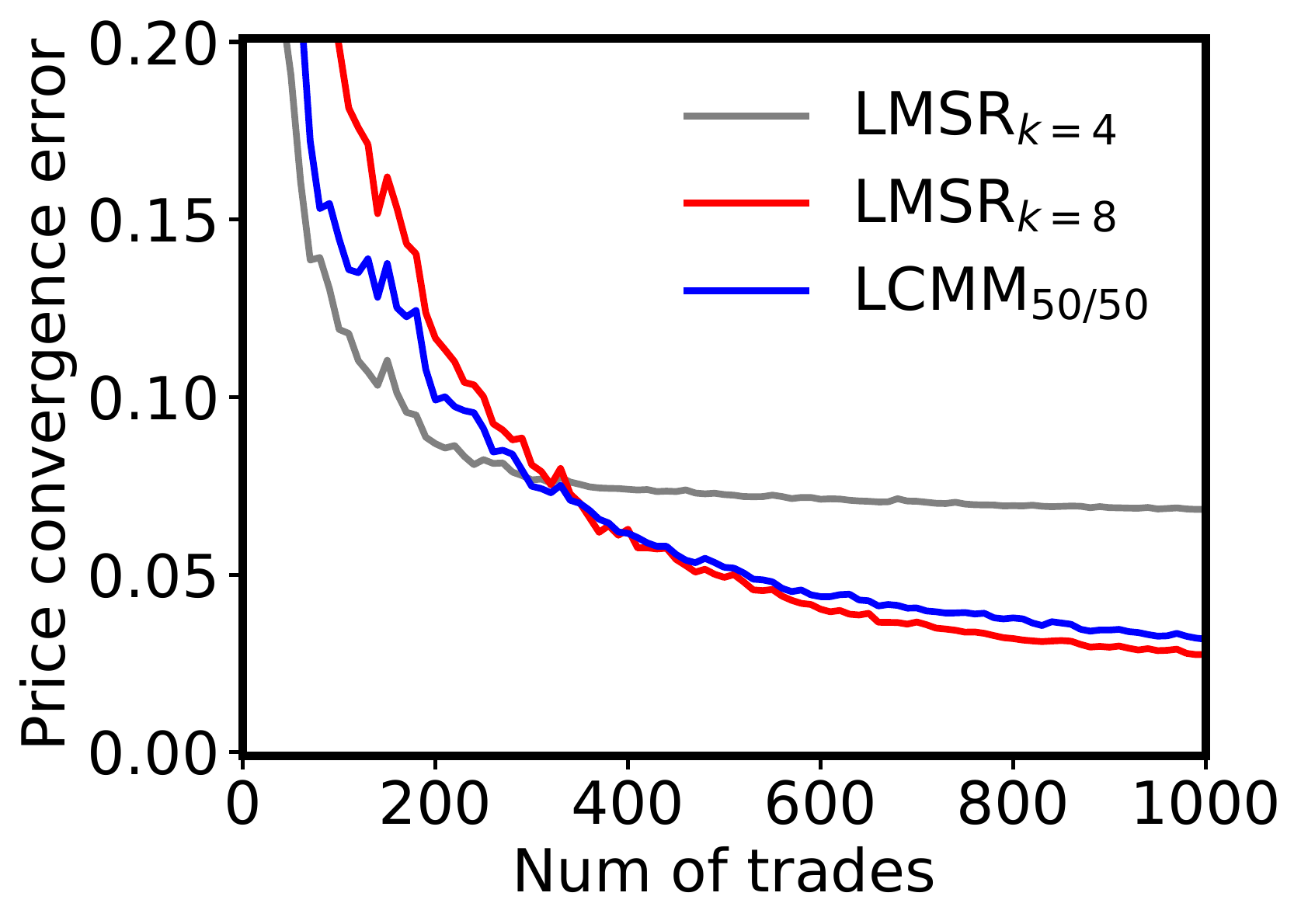}
		\caption{$k=8$.}
	\end{subfigure}
	\caption{The price convergence error as a function of the number of trades, measured at two resolution levels.}
	\label{fig:numerical_exp}
\end{figure}

Fig.~\ref{fig:numerical_exp} shows the price convergence as a function of the number of trades.
As one may expect, $\texttt{LMSR}_{k=4}$ achieves a faster price convergence at the coarser precision level $k=4$ compared to $\texttt{LMSR}_{k=8}$ (Fig.~\ref{fig:numerical_exp}a), but fails to elicit information at any finer granularity by design.\footnote{In Fig.~\ref{fig:numerical_exp}b, to facilitate comparisons, we assume that
	$\texttt{LMSR}_{k=4}$ splits the price of a coarse interval evenly into prices of finer intervals.}
The proposed $\texttt{LCMM}_{50/50}$, by equally splitting the budget between $k=4$ and $k=8$, is able to interpolate between the performance of $\texttt{LMSR}_{k=4}$ and $\texttt{LMSR}_{k=8}$ and achieves the ``best of both worlds'': it can elicit forecasts at the finer level $k=8$ similarly to $\texttt{LMSR}_{k=8}$, but also obtain a fast convergence at the coarser level $k=4$, almost matching the convergence speed of $\texttt{LMSR}_{k=4}$. 

Two immediate questions arise from our work.
First, do our constructions generalize to two- or higher-dimensional outcomes?
One promising avenue is to combine the ideas from our log-time LMSR market maker
with multi-dimensional segment trees~\cite{Mishra16} to obtain an efficient multi-dimensional LMSR based on a static tree.
However, it is not clear how to generalize our balanced LMSR tree construction or the multi-resolution LCMM.
%
%
Second, does our approach extend to non-interval securities, such as call options?
We leave these questions open for future research.


\balance
\bibliographystyle{ACM-Reference-Format}
\bibliography{refs}

\onecolumn
\newpage
\appendix
\section{Deferred Proofs}
\label{app:proofs}
\subsection{Proof of Theorem~\ref{thm:lmsr_price}}
The binary-search property implies that the nodes $z$ included in the price calculation (lines 5 and 9) form the cover of $I$, so the algorithm correctly returns the price of $I$. The running time follows thanks to height balance, which implies the depth of the tree is $\order(\log\nvals)$.

\subsection{Proof of Theorem~\ref{thm:lmsr_buy}}
\label{app:lmsr_buy}
We start by showing that a rotation at node $z$ preserves its \textit{partial normalization correctness}.
There are two kinds of rotations, depicted in \Fig{rotations}.
The \emph{left rotation} takes as input a node $z$, with children denoted $z_1$ and $z_{23}$, and children of $z_{23}$ denoted $z_2$ and $z_3$, and rearranges these relationships by removing the node $z_{23}$ and creating a node $z_{12}$, such that $z$ now has children $z_{12}$ and $z_3$, and $z_{12}$ has children $z_1$ and $z_2$.
The \emph{right rotation} is the symmetric operation.

\begin{figure}[h!]
	\centering
	\includegraphics[height=2.2cm]{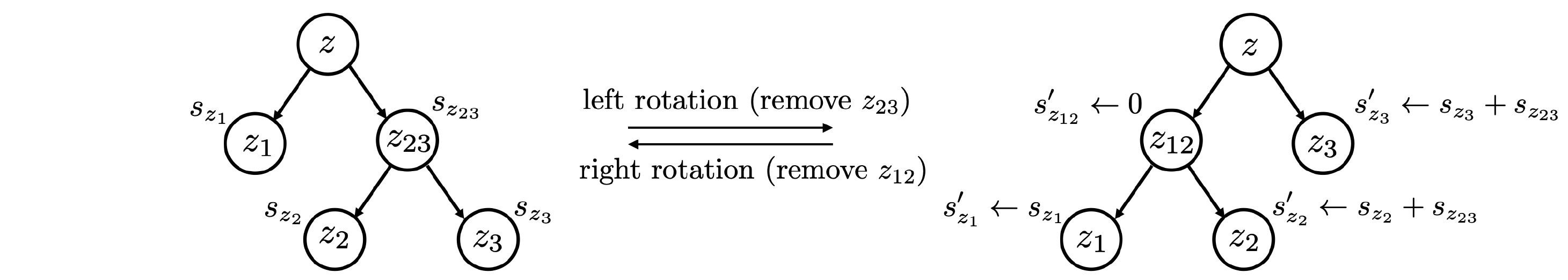}
	\vspace{-1ex}
	\caption{Left and right rotations with node $z$ as an input. Depicted update corresponds to the left rotation.}
	\label{fig:rotations}
	\vspace{-1ex}
\end{figure}

The full procedure of \texttt{RotateLeft} is described in Algorithm~\ref{algo:rotate}. When performing rotations, we need to ensure that the node removal (i.e., removal of $z_{23}$ in left rotation and of $z_{12}$ in right rotation) does not impact the market state.
We achieve this by moving the shares from the removed node into its children, so at the time of removal it holds zero shares (see the right-hand side of \Fig{rotations}, and line 4 of Algorithm~\ref{algo:rotate}).

\begin{algorithm}[h!]
	\caption{Left rotation at node $z$ (right rotation is symmetric).}
	\label{algo:rotate}
	\small
	\begin{algorithmic}[1]
		\algnotext{EndProcedure}
		\State Define subroutines:
		\Statex \algind \Call{ResetInnerNode}{$z$}:
		reset $h_z$ and $S_z$ based on the children of $z$ and the value $s_z$:
		\Statex \algind\algind
		$h_{z}\gets 1+\max\set{h_{\lt{z}},h_{\rt{z}}}$,
		$S_z\gets e^{s_z/b}(S_{\lt{z}}+S_{\rt{z}})$
		\Statex \algind \Call{AddShares}{$z,s$}:
		increase the number of shares held in $z$ by $s$:
		\Statex \algind\algind
		$s_z\gets s_z+s$, $S_z\gets e^{s/b} S_z$
		\medskip
		\Procedure {RotateLeft}{$z$}:
		\smallskip
		\State Let $z_1=\lt{z}$, $z_{23}=\rt{z}$, $z_2=\lt{z_{23}}$, $z_3=\rt{z_{23}}$
		\State \Call {AddShares}{$z_2, s_{z_{23}}$},
		\Call{AddShares}{$z_3, s_{z_{23}}$},
		delete node $z_{23}$
		\State Let $z_{12}$ be a new node with:
		\Statex \algind\algind
		$\lt{z_{12}}=z_1$, $\rt{z_{12}}=z_2$, $I_{z_{12}}=I_{z_1}\cup I_{z_2}$, $s_{z_{12}}=0$
		\State \Call{ResetInnerNode}{$z_{12}$}
		\State Update node $z$:
		\Statex \algind\algind
		$\lt{z}\gets z_{12}$, $\rt{z}\gets z_3$, \Call{ResetInnerNode}{$z$}
		\EndProcedure
	\end{algorithmic}
\end{algorithm}

\begin{lemma}
	A rotation operation preserves partial-normalization correctness.
	\label{lem:rotation}
\end{lemma}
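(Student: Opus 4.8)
The plan is to verify, directly and separately for the left rotation (the right rotation being symmetric), that after the operation every node involved satisfies the recursive identity from \eq{Sz:rec}, given that it held before the rotation. Recall that partial-normalization correctness for a leaf $z$ asserts $S_z = e^{s_z/b}(\beta_z-\alpha_z)$, and for an inner node $S_z = e^{s_z/b}(S_{\lt{z}}+S_{\rt{z}})$. Since a rotation only rearranges a constant-size local cluster of nodes — in \texttt{RotateLeft}, the nodes $z,z_1,z_{23},z_2,z_3$ together with the freshly created $z_{12}$ — and leaves all other nodes and their annotations untouched, it suffices to check the identity at the handful of nodes whose children, or whose $s_z$, have changed: namely $z_2$, $z_3$, $z_{12}$, and $z$. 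The subtrees rooted at $z_1$, $z_2$, $z_3$ are internally unchanged, so correctness propagates up from their roots once we check those roots.

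The key steps, in order. First I would record the pre-rotation state: $S_z = e^{s_z/b}(S_{z_1}+S_{z_{23}})$ and $S_{z_{23}} = e^{s_{z_{23}}/b}(S_{z_2}+S_{z_3})$, with $S_{z_1},S_{z_2},S_{z_3}$ correct by hypothesis. Second, \textsc{AddShares}$(z_2,s_{z_{23}})$ replaces $s_{z_2}$ by $s_{z_2}+s_{z_{23}}$ and multiplies $S_{z_2}$ by $e^{s_{z_{23}}/b}$; I would note that this is exactly the update that keeps \eq{Sz:rec} valid at $z_2$ — whether $z_2$ is a leaf or inner node, multiplying its stored $S$ by $e^{s_{z_{23}}/b}$ matches incrementing its share count by $s_{z_{23}}$. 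Likewise for $z_3$. Call the updated values $S'_{z_2} = e^{s_{z_{23}}/b}S_{z_2}$ and $S'_{z_3} = e^{s_{z_{23}}/b}S_{z_3}$. Third, the new node $z_{12}$ has $s_{z_{12}}=0$ and children $z_1,z_2$, and \textsc{ResetInnerNode} sets $S_{z_{12}} \gets e^{0}(S_{z_1}+S'_{z_2}) = S_{z_1}+e^{s_{z_{23}}/b}S_{z_2}$, which is precisely the recursion at $z_{12}$. Fourth, \textsc{ResetInnerNode}$(z)$ sets
\[
S_z \;\gets\; e^{s_z/b}\bigl(S_{z_{12}}+S'_{z_3}\bigr)
\;=\; e^{s_z/b}\bigl(S_{z_1}+e^{s_{z_{23}}/b}S_{z_2}+e^{s_{z_{23}}/b}S_{z_3}\bigr)
\;=\; e^{s_z/b}\bigl(S_{z_1}+S_{z_{23}}\bigr),
\]
which equals the pre-rotation $S_z$; hence correctness at $z$ is preserved and, since $s_z$ and $I_z$ are unchanged, the value seen by $z$'s parent is unchanged, so correctness propagates unmodified to the rest of the tree. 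The right rotation is handled by the mirror-image computation.

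The only genuine content to watch is the bookkeeping of \emph{which} quantity each subroutine is supposed to preserve: \textsc{AddShares} must preserve \eq{Sz:rec} at the node it touches (true because $S_z$ is multiplicatively linear in $e^{s_z/b}$ for both leaves and inner nodes), while the node deletion must not change the represented market state $\vtheta(T)$ — this is why the shares $s_{z_{23}}$ are pushed into \emph{both} children before $z_{23}$ is removed, exactly mirroring the identity $s_{z_{23}}\vone_{I_{z_{23}}} = s_{z_{23}}\vone_{I_{z_2}} + s_{z_{23}}\vone_{I_{z_3}}$ since $I_{z_2}$ and $I_{z_3}$ partition $I_{z_{23}}$. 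I would state this state-invariance explicitly as a one-line remark, since partial-normalization correctness is only meaningful relative to the market state it is supposed to encode; but neither point requires more than routine verification, so I expect no real obstacle — the main care is simply in not conflating the ``$S$'' before and after each \textsc{AddShares}/\textsc{ResetInnerNode} call.
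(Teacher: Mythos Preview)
Your proposal is correct and follows essentially the same approach as the paper: a direct computation unwinding the effect of \textsc{AddShares} and \textsc{ResetInnerNode} on the $S$-values, culminating in the identity $e^{s_z/b}\bigl(S_{z_{12}}+S'_{z_3}\bigr)=e^{s_z/b}\bigl(S_{z_1}+S_{z_{23}}\bigr)$. Your version is in fact more careful than the paper's, which only checks that $S_z$ is unchanged at the rotation root, whereas you verify the recursion at each touched node ($z_2,z_3,z_{12},z$) and explicitly note the state-invariance $\vtheta(T)$ via $s_{z_{23}}\vone_{I_{z_{23}}}=s_{z_{23}}\vone_{I_{z_2}}+s_{z_{23}}\vone_{I_{z_3}}$; both additions are sound and strengthen the argument without changing its substance.
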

\begin{proof}
	We prove that the original partial normalization value of node $z$, $S_z$, is the same as the updated value, $S'_z$, after a left rotation. A right rotation follows symetrically.
	\begin{align*}
	S_z &= e^{s_z/b} \cdot \Parens{S_{z_1}+S_{z_{23}}}\\
	&= e^{s_z/b} \cdot \Parens{S_{z_1}+e^{s_{z_{23}}/b} \cdot \Parens{S_{z_2}+S_{z_3}}}\\
	&= e^{s_z/b} \cdot \Parens{S'_{z_1}+S'_{z_2}+S'_{z_3}}\\
	&= e^{s_z/b} \cdot \Parens{e^{s'_{z_{12}}/b} \cdot \Parens{S'_{z_1}+S'_{z_2}}+S'_{z_3}} \tag{since $s'_{z_{12}}=0$}\\
	&= e^{s_z/b} \cdot \Parens{S'_{z_{12}}+S'_{z_{3}}} = S'_z
	\end{align*}
\end{proof}

\begin{proof}[Proof of \Thm{lmsr_buy}]
	The correctness of the $\buy$ operation follows because the shares are added to the nodes that form the cover of $I$ (lines 5 and 12 in Algorithms~\ref{algo:buy}), and the updates up the search path restore the properties of the LMSR tree (lines 13--17 in Algorithms~\ref{algo:buy}).
	The running time follows from height balance, which implies that the length of the search path
	is $\order(\log n)=\order(\log\nvals)$.
\end{proof}

\subsection{Proof of Theorem~\ref{thm:lcmm_arb_free}}
\label{app:lcmm_arb_free}
We first show that the constraints $\A^\top\vmu=\vzero$ imply that all levels $\ell=0,1,\dotsc,K$ in $\vmu$ are mutually coherent.
To do this, it suffices to show that all pairs of consecutive levels $\ell$ and $\ell+1$ are coherent, i.e., $\mu_y=\mu_{\ly}+\mu_{\ry}$ for all $y\in\Z_\ell$ where we let $\ly=\lt{y}$ and $\ry=\rt{y}$.

We proceed by induction, beginning with $\ell=K-1$.
In this base case, the constraint $\va_y^\top\vmu=0$, expressed in \eq{constr}, states that $b_K \mu_y=b_K \mu_{\ly}+b_K\mu_{\ry}$, implying levels $K-1$ and $K$ are coherent.

Now assume that all the levels $k>\ell$ are mutually coherent.
We aim to show that levels $\ell$ and $\ell+1$ are coherent.
Pick any $y\in\Z_\ell$.
Then the constraint $\va_y^\top\vmu=0$, expressed in \eq{constr}, implies that
\begin{align}
\notag
\biggParens{\,\sum_{k>\ell} b_{k}}\,\mu_y
&=
\sum_{k>\ell}
b_k\sum_{z\in\Z_k:\:z\subset y} \mu_z
\\
\notag
&=
\sum_{k>\ell}
b_k\,\biggParens{\,\sum_{z\in\Z_k:\:z\subseteq\ly} \mu_z + \sum_{z\in\Z_k:\:z\subseteq\ry} \mu_z}
\\
\label{eq:ind:coh}
&=
\sum_{k>\ell}
b_k\,\BigParens{\mu_{\ly} + \mu_{\ry}}.
\end{align}
\Eq{ind:coh} follows because $\ly$ and $\ry$ are in level $\ell+1$, which is coherent with all levels $k\ge\ell+1$ by the inductive assumption.
Thus, we obtain that $\mu_y=\mu_{\ly} + \mu_{\ry}$ for all $y\in\Z_\ell$, establishing the coherence between levels $\ell$ and $\ell+1$ and completing the induction.

To finish the proof, we note that the LCMM prices at level $K$ are determined by $C_K$, so they describe a probability distribution over $\Omega$.
Since $\A^\top\vp(\vtheta)=\vzero$, all
the levels in $\vp(\vtheta)$ are coherent with level $K$,
which means that they correspond to the expectation of $\vphi$ under the probability distribution
described by the prices at level $K$. Thus, $\vp(\vtheta)$ is a coherent price vector and the multi-resolution LCMM
is therefore arbitrage-free.

\subsection{Proof of Theorem~\ref{thm:constant_loss}}
The worst-case loss of an LCMM is bounded by the sum of the worst-case losses of the component markets $C_k$~\cite{DudikLaPe12}. In our case, these are LMSR submarkets with losses bounded by $b_k\log\,\card{\Z_k}$, so the worst-case loss of the resulting LCMM is at most
\[
\sum_{k=1}^{K} b_k \log(2^k) = \sum_{k=1}^{K} b_k (k\log 2) \le B^*\log 2,
\]
proving the theorem.

\subsection{Proof of Theorem~\ref{thm:lcmm_price}}
Algorithm~\ref{algo:price:lcmm} returns the correct price of $I$, because prices are coherent among submarkets and the nodes included in price calculations form a cover of $I$. The running time is proportional to the length of the search path, which terminates, at the latest, once the first node $z$ with $\alpha_z=\alpha$ is reached. The level of this node coincides with the precision of $\alpha$.

\subsection{Proof of Theorem~\ref{thm:lcmm_buy} and Additional Deferred Material from \Sec{lcmm_buy_cost}}
\label{app:arb}


We begin by deriving an identity that will be useful in the following analysis. For this derivation,
let $C$ be an LMSR with the liquidity parameter $b$, defined over an outcome space $\Omega$.
We will derive a relationship between the price vector in a state $\vtheta$ and the price vector in a new state $\vtheta'=\vtheta+\vdelta$, where $\vdelta$ is any bundle restricted to securities in $E$, i.e., $\delta_\omega=0$ for $\omega\not\in E$.
Denoting $\vmu=\vp(\vtheta)$, $\mu_E=p_E(\vtheta)$, and $\vmu'=\vp(\vtheta')$, we have
\begin{align}
\notag
\mu'_\omega
&=
\frac{e^{\theta_\omega/b}e^{\delta_\omega/b}}
{\sum_{\nu\not\in E} e^{\theta_\nu/b} + \sum_{\nu\in E}e^{\theta_\nu/b}e^{\delta_\nu/b}}
\\
\label{eq:price:update}
&=
\frac{\mu_{\omega}e^{\delta_\omega/b}}{1-\mu_E + \sum_{\nu\in E} \mu_\nu e^{\delta_\nu/b}},
\end{align}
where \Eq{price:update} follows by dividing the numerator as well as denominator by $\sum_{\nu\in\Omega}e^{\theta_\nu/b}$.

We next establish correctness of the arbitrage removal
procedure from Algorithm~\ref{algo:buy:lcmm}. The following lemma provides a critical step:
\begin{lemma}
	Fix a level $\ell<K$. Let $\tvtheta$ be a market state in $\Cx$ such that the associated prices, $\vmu=\vpx(\tvtheta)$, are coherent among all levels $k>\ell$. Then, for any $t\in\R$ and any node $y$ with $\level(y)\le\ell$, the prices after buying $t$ shares of $\va_y$, i.e., $\vmu'=\vpx(\tvtheta+t\va_y)$, remain coherent among all levels $k>\ell$.
	\label{lem:ay}
\end{lemma}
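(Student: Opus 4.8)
The plan is to track what buying $t$ shares of $\va_y$ does to the prices level by level, using the explicit price-update identity in \Eq{price:update}. Write $\ell' = \level(y)$. The bundle $\va_y$ has support on $y$ itself (at level $\ell'$, with weight $B_{\ell'}$) and on all strict descendants of $y$ (at each level $k > \ell'$, with weight $-b_k$). Crucially, at every level $k$, the securities touched by $\va_y$ are contained in $I_y$, so within each level market $C_k$ the update $\vdelta = t\va_y$ restricted to level $k$ is a bundle supported on $E_k := \{z \in \Z_k : z \subseteq y\}$. I would apply \Eq{price:update} separately in each level market, exploiting that the weight $\delta_z$ is constant across all the descendant nodes at a fixed level.

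First I would record the effect at the levels $k > \ell \ge \ell'$: here the shares bought on each node $z \in \Z_k$ with $z \subset y$ equal $-t b_k$, a common value. So by \Eq{price:update}, for any such $z$,
\[
\mu'_z = \frac{\mu_z\, e^{-t}}{1 - \mu^{(k)}_y + \mu^{(k)}_y e^{-t}},
\]
where $\mu^{(k)}_y := \sum_{z \in \Z_k,\, z \subseteq y} \mu_z$ is the price of $I_y$ computed within level $k$, and I used that $b_k \cdot (-t b_k)/b_k = -t b_k/b_k \cdot$ — more carefully, the exponent is $\delta_z/b_k = -t b_k/b_k = -t$, which is the same for every level $k > \ell'$. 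By the hypothesis that all levels $k > \ell$ are coherent, $\mu^{(k)}_y$ takes a common value $\mu^\star$ for all $k > \ell$, so the denominator $1 - \mu^\star + \mu^\star e^{-t}$ is \emph{the same across all levels} $k > \ell$. Hence for each fixed level $k > \ell$, every descendant price $\mu_z$ with $z \subseteq y$ is rescaled by the identical factor $e^{-t}/(1-\mu^\star+\mu^\star e^{-t})$, while prices of nodes $z \in \Z_k$ with $z \not\subseteq y$ (equivalently $z \cap I_y = \emptyset$, since level-$k$ nodes partition $[0,1)$ and $y$ is an ancestor) are rescaled by $1/(1-\mu^\star+\mu^\star e^{-t})$. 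Because this rescaling depends only on whether a node lies inside $I_y$ and on the common constant, and the levels $k > \ell$ agreed before the trade, they still agree after: for any $u \in \Z_k$ with $\ell < k < K$, both $\mu'_u$ and $\sum_{v \in \Z_{k+1},\, v \subset u} \mu'_v$ get the same rescaling applied to quantities that were equal by the inductive coherence, so $\mu'_u = \sum_{v \subset u} \mu'_v$. Summing the pairwise-consecutive coherence (exactly as in the proof of \Thm{lcmm_arb_free}) gives full mutual coherence among all $k > \ell$, which is the claim.

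I would structure the write-up as: (1) state the restriction of $t\va_y$ to each level market and observe the common exponent; (2) invoke \Eq{price:update} to get the rescaling formula; (3) use the pre-trade coherence of levels $> \ell$ to see that the normalizing denominator is level-independent and the "inside $I_y$ vs.\ outside $I_y$" dichotomy is the only thing that matters; (4) conclude that consecutive-level coherence is preserved and transitively lift to all levels $k > \ell$. The main obstacle — the step worth being careful about — is verifying that the rescaling factor really is level-independent, which hinges on two facts used together: that $\delta_z/b_k = -t$ is constant across levels $k > \ell'$ (so the $b_k$ weighting in $\A$ was chosen precisely to make this cancellation happen), and that $\mu^{(k)}_y = \mu^\star$ is constant across levels $k > \ell$ by the coherence hypothesis. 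Note we do \emph{not} need to say anything about level $\ell' = \level(y)$ itself, nor about levels strictly between $\level(y)$ and $\ell$ if any, since the lemma only asserts coherence among levels $k > \ell$; the node $y$'s own level is affected by the weight-$B_{\ell'}$ entry but that is outside the scope of the conclusion. A minor point to handle cleanly is the boundary level $k = K$ (no children to sum over), where "coherence among levels $> \ell$" for the pair $(K-1, K)$ follows directly from the rescaling argument without needing a further level below.
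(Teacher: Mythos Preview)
Your proposal is correct and follows essentially the same approach as the paper's proof: both apply \Eq{price:update} level by level, use that the entry $-b_k$ of $\va_y$ at level $k$ yields a level-independent exponent $-t$, and invoke the coherence hypothesis to obtain a common denominator, concluding that the inside/outside-$I_y$ rescaling preserves the coherence identities. The only cosmetic difference is that the paper verifies coherence directly for an arbitrary pair of levels $\ell<k<m$, whereas you check consecutive levels and appeal to transitivity; both are equally valid.
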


To use \Lem{ay} for arbitrage removal, we start with a market state $\tvtheta$ where all levels are coherent.
When a trader buys some shares of a security $\phi_y$, the level $\ell=\level(y)$ loses coherence with other levels.
By buying a certain number of shares of $\va_y$, it is possible to restore coherence between $\ell$ and $\ell+1$, and \Lem{ay} then implies that coherence with all further levels $k>\ell+1$ is also restored. The process of restoring coherence now continues with the parent of $y$ and the bundle $\va_{\pt{y}}$ as implemented in Algorithm~\ref{algo:buy:lcmm}.

\begin{proof}
	Consider two arbitrary levels $k$ and $m$ with $\ell < k < m$.
	Since prices are coherent between levels $k$ and $m$ before buying $t$ shares of $\va_y$, we have, for any $z\in\Z_{k}$,
	\begin{equation}
	\label{eq:ay:coh}
	\mu_z = \sum_{u\in\Z_{m}:\:u\subset z} \mu_u.
	\end{equation}
	Let $\pi_y$ denote the price of $\phi_y$ according to the securities in $\Z_{k}$ and $\Z_{m}$, that is, $\pi_y=\sum_{z\in\Z_{k}:\:z\subset y} \mu_z=\sum_{u\in\Z_{m}:\:u\subset y} \mu_u$. Note that $\pi_y$ might differ from $\mu_y$, because level $\ell$ is not necessarily coherent with levels $k$ and $m$. Let $\smash{\tvtheta'=\tvtheta+t\va_y}$.
	From the definition of matrix $\A$, the updated $\smash{\ttheta'_z}$ and $\smash{\ttheta'_u}$ for any $z \in \Z_{k}$ and $u \in \Z_{m}$ are
	\[
	\ttheta'_z=
	\begin{cases}
	\ttheta_{z}-tb_{k}
	&\text{if $z \subset y$,}
	\\
	\ttheta_{z}
	&\text{otherwise,}
	\end{cases}
	\qquad\qquad
	\ttheta'_u=
	\begin{cases}
	\ttheta_u-tb_{m}
	&\text{if $u \subset y$,}
	\\
	\ttheta_u
	&\text{otherwise.}
	\end{cases}
	\]
	We calculate the new price $\mu'_z$ of any node $z\in\Z_{k}$ and show it equals to the price derived from its descendants $u \in \Z_{m}$. First, if $z\subset y$, then by \eq{price:update} and \eq{ay:coh},
	\begin{align*}
	\mu'_z &=
	\frac{\mu_z e^{-t}}{\pi_y e^{-t}+1-\pi_y}
	= \frac{\sum_{u\in\Z_{m}:\:u\subset z} \mu_u e^{-t}}{\pi_y e^{-t}+1-\pi_y}
	= \sum_{u\in\Z_{m}:\:u\subset z} \mu'_u.
	\intertext{%
		If $z\not\subset y$, then we similarly have}
	\mu'_z &=
	\frac{\mu_z}{\pi_y e^{-t}+1-\pi_y}
	=\frac{\sum_{u\in\Z_{m}:\:u\subset z} \mu_u}{\pi_y e^{-t}+1-\pi_y}
	= \sum_{u\in\Z_{m}:\:u\subset z} \mu'_u.
	\end{align*}
	Thus, prices remain coherent among all levels $m>k>\ell$.
\end{proof}

Building upon \Lem{ay}, the following lemma provides the precise trade required to restore coherence after an update.

\begin{lemma}
	Fix a level $\ell<K$ and a node $y\in\Z_\ell$ and let $\ly=\lt{y}$ and $\ry=\rt{y}$. Let $\tvthetaz$ and $\tvtheta=\tvthetaz+\vdelta$ be market states in $\Cx$, with associated prices $\vmu^0=\vpx(\tvthetaz)$ and $\vmu=\vpx(\tvtheta)$ such that:
	\begin{itemize}
		\item prices $\smash{\vmu^0}$ are coherent among all levels $k\ge\ell$;
		\item $\vdelta$ is a vector that is zero outside descendants of $y$, i.e., $\delta_z=0$ whenever $z\not\subseteq y$;
		\item prices $\vmu$ are coherent among all levels $k>\ell$.
	\end{itemize}
	Let $\smash{\tvtheta'=\tvtheta+t\va_y}$ where
	\[
	t= \frac{b_\ell}{B_{\ell-1}}\log\Parens{\frac{1-\mu_y}{\mu_y}\cdot\frac{\mu_{\ly}+\mu_{\ry}}{1-\mu_{\ly}-\mu_{\ry}}}
	.
	\]
	Then the associated prices $\vmu'=\vpx(\tvtheta')$ are coherent among all levels $k\ge\ell$.
	\label{lem:arb}
\end{lemma}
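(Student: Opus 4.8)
The plan is to reduce the whole claim to one scalar identity. Since $\vmu$ is coherent among all levels $k>\ell$ and $\level(y)=\ell$, \Lem{ay} tells us that buying $t\va_y$ keeps all levels $k>\ell$ mutually coherent; hence $\vmu'$ is automatically coherent among levels $k>\ell$, and it suffices to show that levels $\ell$ and $\ell+1$ are coherent in $\vmu'$, i.e.\ $\mu'_w=\mu'_{\lt{w}}+\mu'_{\rt{w}}$ for every $w\in\Z_\ell$. Coherence among all levels $k\ge\ell$ then follows by transitivity. (The case $\ell=0$ is immediate, since level $0$ is the single node $\rootnode$ with $\mu_\rootnode=1$, so only $\ell\ge 1$ needs work.)

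\noindent First I would record the shape of the arbitrage bundle on the two relevant levels. By the definition of $\A$ in \eq{A}, the only nonzero level-$\ell$ entry of $\va_y$ is the $y$-entry, equal to $B_\ell$, and the only nonzero level-$(\ell+1)$ entries are those of $\ly$ and $\ry$, each equal to $-b_{\ell+1}$ (the children of $y$ are exactly the level-$(\ell+1)$ descendants of $y$). Applying the price-update identity \eq{price:update} to the level-$\ell$ LMSR with $E=\{y\}$ and to the level-$(\ell+1)$ LMSR with $E=\{\ly,\ry\}$, starting from the prices $\vmu=\vpx(\tvtheta)$, yields
\[
\mu'_y=\frac{\mu_y\,e^{tB_\ell/b_\ell}}{1-\mu_y+\mu_y e^{tB_\ell/b_\ell}},
\qquad
\mu'_{\ly}+\mu'_{\ry}=\frac{\mu_{E_y}\,e^{-t}}{1-\mu_{E_y}+\mu_{E_y}e^{-t}},
\]
where $\mu_{E_y}:=\mu_{\ly}+\mu_{\ry}$, while every other level-$\ell$ price and every level-$(\ell+1)$ price outside $\{\ly,\ry\}$ is merely rescaled by a common factor. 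Equating $\mu'_y$ with $\mu'_{\ly}+\mu'_{\ry}$ and using that $x\mapsto x/(1-x)$ is monotone on $(0,1)$, this reduces to $\tfrac{\mu_y}{1-\mu_y}e^{tB_\ell/b_\ell}=\tfrac{\mu_{E_y}}{1-\mu_{E_y}}e^{-t}$, i.e.\ $e^{tB_{\ell-1}/b_\ell}=\tfrac{1-\mu_y}{\mu_y}\cdot\tfrac{\mu_{E_y}}{1-\mu_{E_y}}$ after substituting $B_{\ell-1}=b_\ell+B_\ell$; solving for $t$ reproduces exactly the formula in the statement. So with the stated $t$ we get $\mu'_y=\nu_y$, writing $\nu_w:=\mu'_{\lt{w}}+\mu'_{\rt{w}}$.

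\noindent The remaining task, which I expect to be the main obstacle, is the nodes $w\in\Z_\ell\setminus\{y\}$. The key is a locality observation: the combined displacement $\vdelta+t\va_y$ is supported, on level $\ell$, only on coordinate $y$, and on level $\ell+1$, only on $\ly$ and $\ry$ — because $\vdelta$ vanishes outside descendants of $y$ and $\va_y$ touches only $y$ and its descendants. Hence $\ttheta'_w=\ttheta^0_w$ for all $w\in\Z_\ell\setminus\{y\}$ and $\ttheta'_z=\ttheta^0_z$ for all $z\in\Z_{\ell+1}\setminus\{\ly,\ry\}$, so over $w\ne y$ the prices $\mu'_w$ are a uniform rescaling of $\mu^0_w$, and the prices $\mu'_{\lt{w}},\mu'_{\rt{w}}$ are a uniform rescaling of $\mu^0_{\lt{w}},\mu^0_{\rt{w}}$. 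Invoking the assumed coherence of $\vmu^0$ between levels $\ell$ and $\ell+1$ ($\mu^0_w=\mu^0_{\lt{w}}+\mu^0_{\rt{w}}$), both $(\mu'_w)_{w\ne y}$ and $(\nu_w)_{w\ne y}$ are therefore proportional to the positive vector $(\mu^0_w)_{w\ne y}$. Since the level-$\ell$ prices sum to $1$ and so do the level-$(\ell+1)$ prices, $\sum_{w\ne y}\mu'_w=1-\mu'_y=1-\nu_y=\sum_{w\ne y}\nu_w$ once $\mu'_y=\nu_y$; two positive vectors proportional to the same vector with equal sums coincide, so $\mu'_w=\nu_w$ for all $w\ne y$ as well.

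\noindent Putting the pieces together establishes coherence between levels $\ell$ and $\ell+1$ in $\vmu'$; combined with the coherence among all levels $k>\ell$ granted by \Lem{ay}, transitivity gives coherence of $\vmu'$ among all levels $k\ge\ell$, which is the claim. The $w=y$ computation is the routine part (it mirrors the worked example and the derivation of \eqref{eq:arb_t} in the main text); the genuinely delicate step is recognizing that the locality of $\vdelta+t\va_y$ plus mass balance forces coherence at all the ``untouched'' sibling nodes $w\ne y$ once the single $y$-constraint has been solved.
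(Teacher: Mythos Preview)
Your proposal is correct and follows essentially the same route as the paper's proof: invoke \Lem{ay} to reduce to coherence between levels $\ell$ and $\ell+1$, use the locality of $\vdelta+t\va_y$ to see that the states at all level-$\ell$ nodes $w\ne y$ and all level-$(\ell+1)$ nodes outside $\{\ly,\ry\}$ are unchanged from $\tvthetaz$, deduce that those prices are common rescalings of the $\vmu^0$-prices (which are coherent), and then verify the single identity $\mu'_y=\mu'_{\ly}+\mu'_{\ry}$ by the explicit LMSR update. The only cosmetic differences are that the paper handles the $w\ne y$ nodes first (writing the rescaling as $\mu'_w/(1-\mu'_y)=\mu^0_w/(1-\mu^0_y)$ directly from \eq{price:update}) and then reduces everything to the $y$-identity, whereas you solve the $y$-identity first and close the $w\ne y$ case via a ``proportional with equal mass'' argument; these are equivalent.
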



\begin{proof}
	By \Lem{ay}, adding $t\va_y$ to $\tvtheta$ maintains coherence among levels $k>\ell$, so it
	suffices to show that levels $\ell$ and $\ell+1$ are mutually coherent in $\vmu'$. Thus, we have to show that $\mu'_z=\mu'_{\lt{z}}+\mu'_{\rt{z}}$ for all $z\in\Z_\ell$.
	
	First note that by the assumption on $\vdelta$ and the definition of $\va_y$, we have
	\begin{align*}
	\ttheta^0_z=\ttheta_z=\ttheta'_z
	&\quad\text{for all $z\in\Z_\ell\wo\set{y}$}
	\\
	\ttheta^0_u=\ttheta_u=\ttheta'_u
	&\quad\text{for all $u\in\Z_{\ell+1}\wo\set{\ly,\ry}$.}
	\end{align*}
	Therefore, by \eq{price:update}, we have for all $z\in\Z_\ell\wo\set{y}$
	\begin{equation}
	\label{eq:mu':mu0}
	\frac{\mu'_z}{1-\mu'_y}=\frac{\mu^0_z}{1-\mu^0_y},
	\qquad
	\text{and}
	\qquad
	\frac{\mu'_{\lt{z}}+\mu'_{\rt{z}}}{1-\mu'_{\ly}-\mu'_{\ry}}
	=\frac{\mu^0_{\lt{z}}+\mu^0_{\rt{z}}}{1-\mu^0_{\ly}-\mu^0_{\ry}}.
	\end{equation}
	Since the vector $\vmu^0$ satisfies $\mu^0_z=\mu^0_{\lt{z}}+\mu^0_{\rt{z}}$ for all $z\in\Z_\ell\wo\set{y}$, \eq{mu':mu0}
	implies that we also have
	$\mu'_z=\mu'_{\lt{z}}+\mu'_{\rt{z}}$ for all $z\in\Z_\ell\wo\set{y}$ as long as $\mu'_y=\mu'_{\ly}+\mu'_{\ry}$. Thus, in order to show that levels $\ell$ and $\ell+1$
	are coherent in $\vmu'$, it suffices to show that $\mu'_y=\mu'_{\ly}+\mu'_{\ry}$.
	
	We begin by explicitly calculating $\ttheta'_z$ and $\ttheta'_u$ for any $z\in\Z_\ell$ and any $u\in\Z_{\ell+1}$:
	\[
	\ttheta'_z=
	\begin{cases}
	\ttheta_z+tB_\ell
	&\text{if $z=y$,}
	\\
	\ttheta_z
	&\text{otherwise,}
	\end{cases}
	\qquad\qquad
	\ttheta'_{u}=
	\begin{cases}
	\ttheta_{u}-tb_{\ell+1}
	&\text{if $u\in\set{\ly,\ry}$,}
	\\
	\ttheta_{u}
	&\text{otherwise.}
	\end{cases}
	\]
	Therefore,
	\begin{align*}
	&
	\mu'_y
	=\frac{\mu_y e^{tB_\ell/b_\ell}}{\mu_y e^{tB_\ell/b_\ell}+1-\mu_y}
	=\frac{1}{1+\frac{1-\mu_y}{\mu_y}e^{-tB_\ell/b_\ell}}
	\intertext{and similarly,}
	&
	\mu'_{\ly}+\mu'_{\ry}
	=\frac{(\mu_{\ly}+\mu_{\ry})e^{-t}}{(\mu_{\ly}+\mu_{\ry}) e^{-t}+1-\mu_{\ly}-\mu_{\ry}}
	=\frac{1}{1+\frac{1-\mu_{\ly}-\mu_{\ry}}{\mu_{\ly}+\mu_{\ry}}e^t}.
	\end{align*}
	Thus, it remains to show that
	\[\textstyle
	\frac{1-\mu_y}{\mu_y}e^{-tB_\ell/b_\ell}
	=
	\frac{1-\mu_{\ly}-\mu_{\ry}}{\mu_{\ly}+\mu_{\ry}}e^t,
	\]
	or equivalently:
	\[\textstyle
	\frac{1-\mu_y}{\mu_y}\cdot\frac{\mu_{\ly}+\mu_{\ry}}{1-\mu_{\ly}-\mu_{\ry}}
	=
	e^{t(1+B_\ell/b_\ell)}.
	\]
	But this follows from our choice of $t$ and the fact that $B_{\ell-1}=B_\ell+b_\ell$,
	completing the proof.
\end{proof}

We finish the section with the proof of \Thm{lcmm_buy}.

\begin{proof}[Proof of \Thm{lcmm_buy}]
	Algorithm~\ref{algo:buy:lcmm} correctly updates the tree (and returns the cost), because the shares are added to the nodes that form a cover of $I$, and coherence is then restored by applying \Lem{arb} up the search path.
	Running times of both algorithms are proportional to the length of the search path to the first node $z$ with $\alpha_z=\alpha$, whose level coincides with the precision of $\alpha$.
\end{proof}

\medskip
\medskip
\section{Trading Dynamics and Additional Results}
\label{app:experiments}
\subsection{Trading Dynamics}
\label{app:trading_dynamics}
We simulate a market consisting of ten traders.
The outcome space is $[0,1)$, discretized at the precision $K=10$.
Traders, indexed as $i\in\{1,\dotsc,10\}$, have noisy access to the underlying true signal $p=0.4$.
Trader $i$'s belief takes form of a beta distribution Beta$(a_i, b_i)$ with $a_i \sim$ Binomial$(p, n_i)$, $b_i = n_i-a_i$, and $n_i = 16i$ representing the quality of the agent's observation of the signal $p$.
Each trader $i$ has an exponential utility $u_i(W) = -e^{-W}$, where $W$ is the trader's wealth.
We consider budget-limited cost-based market makers, whose worst-case loss may not exceed a budget constraint $B$.
For LMSR at precision $k$, this means setting the liquidity parameter to $b=B/\log\parens{2^k}$.
In our experiments, we consider two LMSR markets at precision levels 4 and 8, denoted as $\texttt{LMSR}_{k=4}$ and $\texttt{LMSR}_{k=8}$.
On the other hand, a multi-resolution LCMM has an infinite number of choices for its liquidity at each precision level.
To showcase its interpolation ability, we consider LCMM that evenly splits its budget to precision levels 4 and 8, and denote it as $\texttt{LCMM}_{50/50}$.


Each market starts with the uniform prior, i.e., the initial market prices for all outcomes are equal.
In each time step, a uniformly random agent is picked to trade.
The selected agent considers a set of 50 interval securities, with endpoints randomly sampled according to the agent's belief.
The candidate intervals are rounded to the precision of the corresponding market.%
\footnote{As the number of available interval securities grows exponentially as the supported precision increases, we assume agents have a computational limit and can only consider a (sub)set of available securities.}
The agent considers trading the expected-utility-optimizing number of shares for each interval, and ultimately picks the best interval and executes the trade.
The market maker updates prices accordingly, until the market equilibrium is reached (no trader in the market has the incentive to trade).

Following the described protocol, we run markets mediated by the three respective market makers, $\texttt{LMSR}_{k=4}$, $\texttt{LMSR}_{k=8}$, and $\texttt{LCMM}_{50/50}$, over a range of budget constraints.
To decrease variance, we generate 40 controlled simulation traces (described by a sequence of agent arrivals and their draws of the candidate intervals) and run the market makers on those same traces.
Therefore, any change in agent behavior and price convergence is caused by the different cost functions that market makers adopt to aggregate trades.

\subsection{Additional Experiments}
In Section~\ref{sec:exp}, we demonstrated that by splitting the budget between submarkets that offer interval securities at different precisions, the multi-resolution LCMM is able to interpolate the performance of LMSR market makers.
It can aggregate information at the coarser level efficiently, while also achieving accurate belief elicitation at the finer resolution (after sufficiently many trades).
Here we provide numerical results over a wider range of market maker's budget constraints, validating how the multi-resolution LCMM can balance the price convergence behavior of LMSR markets.

Fig.~\ref{fig:U_shape} shows the price convergence error as a function of budget constraint (thus, the liquidity parameter) and the number of trades for the three respective market makers.
Results are averaged over forty random but controlled trading sequences.
The solid lines depict the price convergence error at precision level $k=8$, and the dashed ones for precision level $k=4$.
The minimum point on each curve indicates the optimal budget, or the optimal value of the liquidity parameter to adopt, for the particular cost function and a specific number of trades.

Intuitively, when the budget for running a market is sufficient, a market operator can support interval securities at any fine-grained precision level, or use only a portion of the budget to achieve optimal performance.
However, when the budget for running a market is limited, say B less than 8, the market designer can preferably aggregate information faster at a coarser resolution by limiting the precision of interval endpoints (e.g., adopting $\texttt{LMSR}_{k=4}$).
However, by design, it cannot accurately elicit beliefs at finer resolutions, even when the market is run for a sufficiently long period of time.
The $\texttt{LMSR}_{k=8}$, on the other hand, benefits from a larger number of trades to aggregate more fine-grained information.
Running the two LMSR markets independently may balance this convergence trade-off, but inevitably results in inconsistent prices between the markets.
Given the different convergence properties of separate LMSRs, a multi-resolution LCMM can allocate its budget accordingly to achieve a desired convergence performance, while maintaining coherent prices.
For example, a market designer, who considers information at precision levels $k=4$ and $k=8$ equally important, may divide the budget between the two levels to enjoy faster price convergence at the coarser resolution, while accurately aggregating a full probability distribution of the continuous variable as trading proceeds.

\begin{figure}[t]
	\centering
	\begin{subfigure}[t]{0.32\textwidth}
		\centering
		\includegraphics[width=\textwidth]{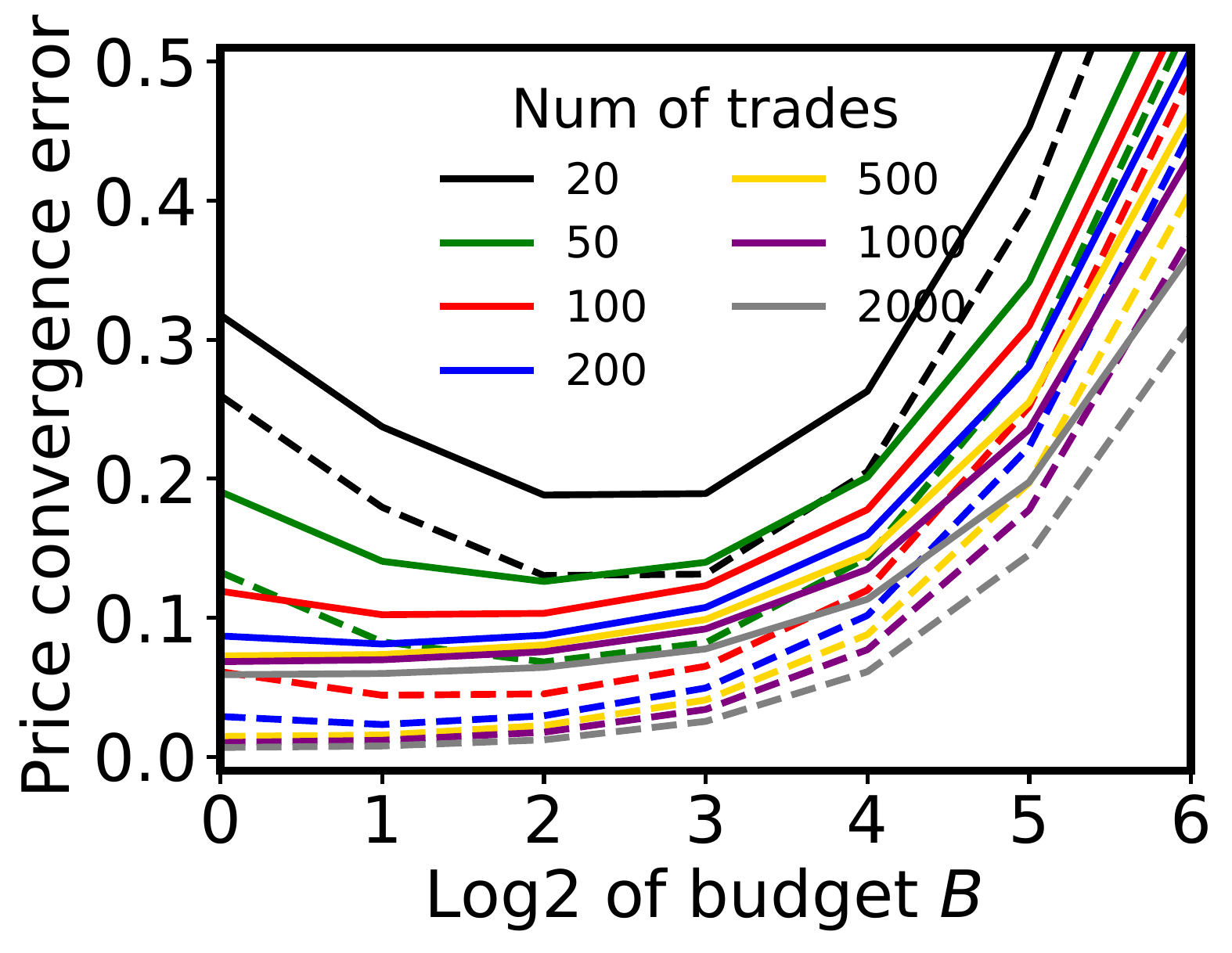}
		\caption{$\texttt{LMSR}_{k=4}$.}
	\end{subfigure}
	\begin{subfigure}[t]{0.32\textwidth}
		\centering
		\includegraphics[width=\textwidth]{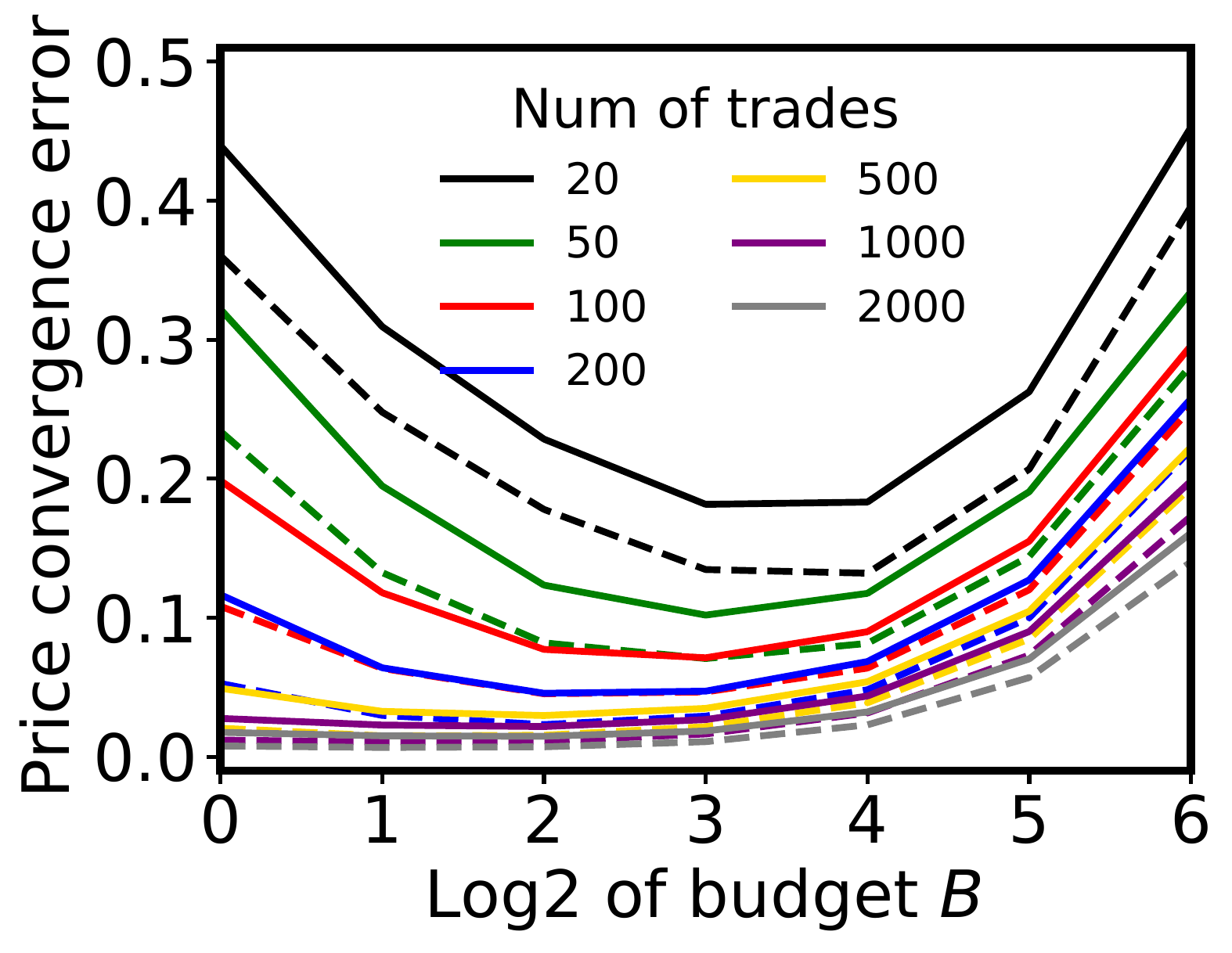}
		\caption{$\texttt{LMSR}_{k=8}$.}
	\end{subfigure}
	\begin{subfigure}[t]{0.32\textwidth}
		\centering
		\includegraphics[width=\textwidth]{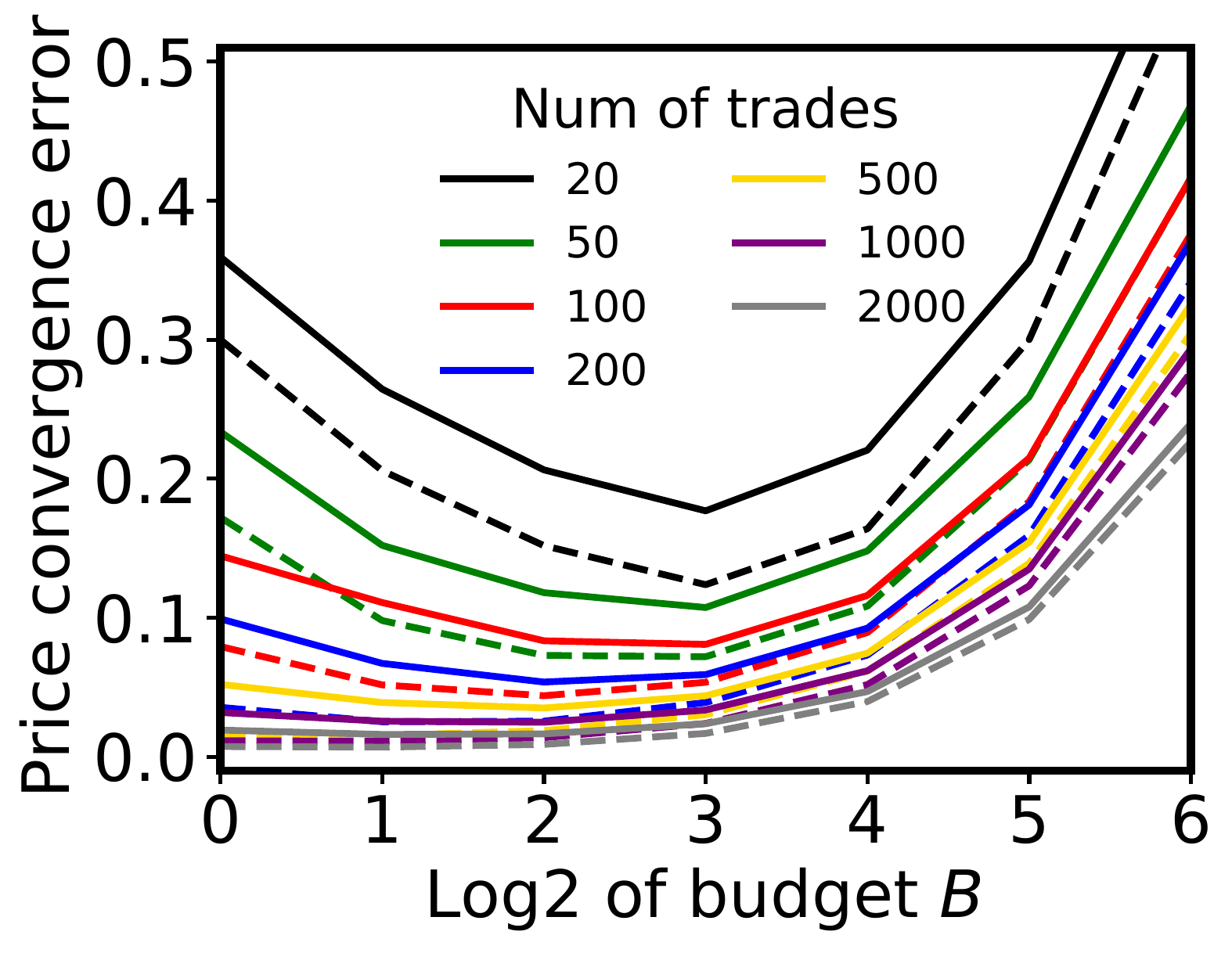}
		\caption{$\texttt{LCMM}_{50/50}$.}
	\end{subfigure}
	\caption{The price convergence error as a function of liquidity and the number of trades (indicated by the color of the line) for the three respective market makers. Solid lines record price convergence error at the finer precision level $k=8$, and dashed ones at the coarser level $k=4$.}
	\label{fig:U_shape}
	\vspace{-2ex}
\end{figure}
\end{document}